\newcommand{\opt}{\mathit{OPT}}
\newcommand{\stp}{\mathit{STP}}
\newcommand{\borchersdu}{\textsc{RestrictedST}}
\newcommand{\connect}{\textsc{Connect}}
\newcommand{\guessold}{\textsc{BuildST}}
\newcommand{\p}{\ensuremath{\textsf{\upshape P}}}
\newcommand{\np}{\ensuremath{\textsf{\upshape NP}}}
\title{New algorithms for Steiner tree reoptimization}
\titlerunning{New algorithms for Steiner tree reoptimization} 
\author{Davide Bilò}
\affil{Department of Humanities and Social Sciences, University of Sassari\\via Roma 151, 07100 Sassari (SS), Italy.\\E-mail: \texttt{davidebilo@uniss.it}}
\authorrunning{D. Bilò} 
\subjclass{G.2.2 Graph algorithms}
\keywords{Steiner tree problem, reoptimization, approximation algorithms.}
\begin{document}

\maketitle

\begin{abstract}
{\em Reoptimization} is a setting in which we are given an (near) optimal solution of a  problem instance and a local modification that slightly changes the instance. The main goal is that of finding an (near) optimal solution of the modified instance.

We investigate one of the most studied scenarios in reoptimization  known as {\em Steiner tree reoptimization}. Steiner tree reoptimization is a collection of strongly $\np$-hard optimization problems that are defined on top of the classical Steiner tree problem and for which several constant-factor approximation algorithms have been designed in the last decade. 
In this paper we improve upon all these results by developing a novel technique that allows us to design {\em polynomial-time approximation schemes}. Remarkably, prior to this paper, no approximation algorithm better than recomputing a solution from scratch was known for the elusive scenario in which the cost of a single edge decreases. Our results are best possible since none of the problems addressed in this paper admits a fully polynomial-time approximation scheme, unless $\p=\np$.
\end{abstract}


\section{Introduction}
{\em Reoptimization} is a setting in which we are given an instance $I$ of an optimization problem together with an (near) optimal solution $S$ for $I$ and a local modification that, once applied to $I$, generates a new instance $I'$ which slightly differs from $I$. The main goal is that of finding an (near) optimal solution of $I'$. Since reoptimization problems defined on top of $\np$-hard problems are usually $\np$-hard, the idea beyond reoptimization is to compute a good approximate solution of $I'$ by exploiting the structural properties of $S$. This approach leads to the design of reoptimization algorithms that, when compared to classical algorithms that compute feasible solutions from scratch, are less expensive in terms of running time or output solutions which guarantee a better approximation ratio. Thus, reoptimization finds applications in many real settings (like scheduling problems or network design problems) where prior knowledge is often at our disposal and a problem instance can arise from a local modification of a previous problem instance (for example, a scheduled job is canceled or some links of the network fail).

 The term reoptimization was mentioned for the first time in the paper of Sch{\"{a}}ffter~\cite{DBLP:journals/dam/Schaffter97}, where the author addressed an $\np$-hard scheduling problem with forbidden sets in the scenario of adding/removing a forbidden set. Since then, reoptimization has been successfully applied to other $\np$-hard problems including: the Steiner tree problem~\cite{DBLP:conf/swat/BiloBHKMWZ08, DBLP:conf/mfcs/BiloZ12, DBLP:journals/jda/BockenhauerFHMSS12, DBLP:journals/tcs/BockenhauerHKMR09, DBLP:conf/sofsem/BockenhauerHMW08, DBLP:journals/aor/EscoffierMP09, DBLP:conf/fsttcs/GoyalM15, DBLP:journals/endm/ZychB11}, the traveling salesman problem and some of its variants~\cite{DBLP:journals/networks/ArchettiBS03, DBLP:journals/jda/AusielloEMP09,DBLP:conf/lata/BergH09, DBLP:journals/aor/BockenhauerFHKKPW07, DBLP:conf/sofsem/BockenhauerHMW08, DBLP:conf/birthday/BockenhauerHS11, DBLP:journals/fuin/BockenhauerHS11, DBLP:journals/tcs/BockenhauerKK09, DBLP:journals/jda/BockenhauerK10, DBLP:journals/ipl/Monnot15}, scheduling problems~\cite{DBLP:journals/algorithmica/BenderFFFG15, DBLP:journals/tcs/BoriaC14}, the minimum latency problem~\cite{DBLP:conf/cocoon/Dai17}, the minimum spanning tree problem~\cite{DBLP:journals/jda/BoriaP10}, the rural postman problem~\cite{DBLP:journals/cor/ArchettiGS13}, many covering problems~\cite{DBLP:conf/waoa/BiloWZ08}, the shortest superstring problem~\cite{DBLP:journals/algorithmica/BiloBKKMSZ11}, the knapsack problem~\cite{DBLP:journals/dam/ArchettiBS10}, the maximum-weight induced hereditary subgraph problems~\cite{DBLP:journals/tcs/BoriaMP13}, and the maximum $P_k$ subgraph problem~\cite{DBLP:journals/dmaa/BoriaMP13}. Some overviews on reoptimization can be found in~\cite{doi:10.1142/9781848162778_0004, DBLP:journals/rairo/BoriaP11, AnnaZychPhDthesis}.\footnote{In this paper we focus only on  reoptimization problems that are $\np$-hard.}

\subsection{Our results}

In this paper we address {\em Steiner tree reoptimization}. Steiner tree reoptimization is a collection of optimization problems that are built on top of the famous {\em Steiner tree problem} ($\stp$ for short), a problem that given an edge-weighted graph as input, where each vertex can be either terminal or Steiner, asks to find a minimum-cost tree spanning all the terminals. More precisely, we study the four local modifications in which the cost of an edge changes (it can either increase or decrease) or a vertex changes its status (from Steiner to terminal and viceversa).\footnote{We observe that the insertion of an edge can be modeled by decreasing the cost of the edge from $+\infty$ to its real value. Analogously, the deletion of an edge can be modeled by increasing the cost of the edge to $+\infty$.} All the problems addressed are strongly $\np$-hard~\cite{DBLP:journals/tcs/BockenhauerHKMR09, DBLP:conf/sofsem/BockenhauerHMW08}, which implies that none of them admits a fully polynomial-time approximation scheme, unless $\p = \np$.

In this paper we improve upon all the constant-factor approximation algorithms that have been developed in the last 10 years by designing {\em polynomial-time approximation schemes}. More precisely, if $S$ is a $\rho$-approximate solution of $I$, then all our algorithms compute a $(\rho+\epsilon)$-approximate solution of $I'$. For the scenarios in which the cost of an edge decreases or a terminal vertex becomes Steiner, our polynomial-time approximation schemes hold under the assumption that $\rho=1$, i.e., the provided solution is optimal. We observe that this assumption is somehow necessary since Goyal and M{\"{o}}mke (see~\cite{DBLP:conf/fsttcs/GoyalM15}) proved that in both scenarios, unless $\rho=1$, any $\sigma$-approximation algorithm for the reoptimization problem would be also a $\sigma$-approximation algorithm for the $\stp$! Remarkably, prior to this paper, no approximation algorithm better than recomputing a solution from scratch was known for the elusive scenario in which the cost of a single edge decreases.

The state of the art of Steiner tree reoptimization is summarized in Table~\ref{table}. We observe that the only problem that remains open is the design of a reoptimization algorithm for the scenario in which a vertex, that can be either terminal or Steiner, is added to the graph. In fact, as proved by Goyal and M{\"{o}}mke~\cite{DBLP:conf/fsttcs/GoyalM15}, the scenario in which a vertex is removed from the graph is as hard to approximate as the $\stp$ (think of the removal of a vertex that is connected to each terminal by an edge of cost 0).


\begin{table}[ht]
\caption{The state of the art of the approximability of Steiner tree reoptimization before and after our paper. The value $\sigma$ is the best approximation ratio we can achieve for $\stp$ (actually $\sigma =\ln 4 + \epsilon\approx 1.387$~\cite{DBLP:journals/jacm/ByrkaGRS13}). If we substitute $\sigma =\ln 4 + \epsilon$, we get $\frac{10\sigma-7}{7\sigma-4}\approx 1.204, \frac{7\sigma-4}{4\sigma-1}\approx 1.256$, and $\frac{5\sigma-3}{3\sigma-1}\approx 1.246$. The bound of $\frac{5\sigma-3}{3\sigma-1}$ for the scenario in which the cost of an edge decreases holds when the modified edge can affect only its cost in the shortest-path metric closure. To increase readability, all the bounds are provided for the case in which $\rho=1$. Local modifications marked with an asterisk are those for which the condition $\rho=1$ is necessary as otherwise the reoptimization problem would be as hard to approximate as the $\stp$. Finally, the question mark means that the problem is still open.} 
\centering 
\begin{tabular}{|l |c |c|} 
\hline 
{\bf Local modification} & {\bf before our paper} & {\bf after our paper}\\[1ex] 
\hline 
a terminal becomes a Steiner vertex* & $\frac{10\sigma-7}{7\sigma-4}+\epsilon$~\cite{DBLP:conf/fsttcs/GoyalM15} & $1+\epsilon$  \\
\hline 
a Steiner vertex becomes terminal  & $\frac{10\sigma-7}{7\sigma-4}+\epsilon$~\cite{DBLP:conf/fsttcs/GoyalM15} & $1+\epsilon$  \\
\hline
an edge cost increases & $\frac{7\sigma-4}{4\sigma-1}+\epsilon$~\cite{DBLP:conf/fsttcs/GoyalM15} & $1+\epsilon$  \\
\hline
an edge cost decreases* & $\frac{5\sigma-3}{3\sigma-1}$~\cite{DBLP:conf/swat/BiloBHKMWZ08} & $1+\epsilon$  \\
\hline 
a vertex is added to the graph*	& ?	& ? \\
\hline
a vertex is deleted from the graph & as hard as the $\stp$~\cite{DBLP:conf/fsttcs/GoyalM15} & as hard as the $\stp$~\cite{DBLP:conf/fsttcs/GoyalM15}\\
\hline
\end{tabular}
\label{table} 
\end{table}

\subsection{Used techniques}
All the polynomial-time approximation schemes proposed in this paper make a clever use of the following three algorithms:
\begin{itemize}
\item the algorithm $\connect$ that computes a Steiner tree in polynomial time by augmenting a Steiner forest having a constant number of trees with a minimum-cost set of edges. This algorithm has been introduced by B\"ockenhauer et al. in~\cite{DBLP:journals/tcs/BockenhauerHKMR09} and has been subsequently used in several other papers on Steiner tree reoptimization~\cite{DBLP:conf/mfcs/BiloZ12, DBLP:journals/jda/BockenhauerFHMSS12, DBLP:conf/fsttcs/GoyalM15, DBLP:journals/endm/ZychB11};
\item the algorithmic proof of Borchers and Du~\cite{DBLP:journals/siamcomp/BorchersD97} that, for every $\xi > 0$, converts a Steiner tree $S$ into a {\em $f(\xi)$-restricted Steiner tree} $S_{\xi}$, for some function $f(\xi)$ that depends on $\xi$ only, whose cost is a $(1+\xi)$-factor away from the cost of $S$. This algorithm has been used for the first time in Steiner tree reoptimization by Goyal and M{\"{o}}mke~\cite{DBLP:conf/fsttcs/GoyalM15};
\item a novel algorithm, that we call $\guessold$, that takes a $f(\xi)$-restricted Steiner forest $S_{\xi}$ of $S$ with $q$ trees, generated with the algorithm of Borchers and Du, and a positive integer $h$ as inputs and computes a minimum-cost Steiner tree $S'$ w.r.t. the set of all the feasible solutions that can be obtained by swapping up to $h$ {\em full components} of $S_{\xi}$ with a minimum-cost set of edges (that is computed using the algorithm $\connect$). The running time of this algorithm is polynomial when all the three parameters $f(\xi),q$, and $h$ are bounded by a constant.
\end{itemize}

We prove that the approximation ratio of the solution $S'$ returned by the algorithm $\guessold$ is $\rho+\epsilon$ (i) by showing  that the cost of $S'$ is at most the cost of other feasible solutions $S_1, \dots, S_\ell$ and (ii) by proving useful upper bounds to the cost of each $S_i$’s.
Intuitively, each $S_i$ is obtained by swapping up to $h$ suitable full components, say $H_i$, of a suitable $f(\xi)$-restricted version of a forest in $S$, say $S_{\xi}$, with a minimal set of full components, say $H'_i$, of a $g(\xi)$-restricted version of a fixed optimal solution $\opt'$. This implies that we can easily bound the cost of each $S_i$ w.r.t. the cost of $S_{\xi}$, the cost of $H_i$, and the cost of $H'_i$. Unfortunately, none of these bounds can be used by itself to prove the claim. We address this issue by carefully defining $H_{i+1}$ as a function of both $\opt'$ and $H'_i$, and $H'_{i}$ as a function of both $S_{\xi}$ and $H_i$. This trick allows us to derive better upper bounds: we prove that the cost of each $S_i$ is at most $\rho g(\xi)^2$ times the cost of $\opt'$ plus the $i$-th term of a telescoping sum. As each term of the telescoping sum can be upper bounded by $g(\xi)$ times the cost of $\opt'$, the bound of $(\rho+\epsilon)$ on the approximation ratio of $S'$ then follows by averaging the costs of the $S_i$'s and because of the choices of the parameters. 


 The paper is organized as follows: in Section~\ref{definition_notation} we provide the basic definitions and some preliminaries; in Section~\ref{general_tools} we describe the three main tools that are used in all our algorithms; in Sections~\ref{R_plus},~\ref{E_plus},~\ref{R_minus}, and~\ref{E_minus} we describe and analyze the algorithms for all the four local modifications addressed in this paper.
 

\section{Basic definitions and preliminaries}\label{definition_notation}

The Steiner tree problem ($\stp$ for short) is defined as follows: 
\begin{itemize}
\item the input is a triple $I=\langle G, c, R\rangle$ where $G=(V(G),E(G))$ is a connected undirected graph with $|V(G)|=n$ vertices, $c$ is a function that associates a real value $c(e)\geq 0$ to each edge $e \in E(G)$, and $R\subseteq V(G)$ is a set of {\em terminal} vertices;
\item the problem asks to compute a minimum-cost Steiner tree of $I$, i.e., a tree that spans $R$ and that minimizes the overall sum of its edge costs.
\end{itemize}
The vertices in $V(G)\setminus R$ are also called {\em Steiner} vertices. With a slight abuse of notation, for any subgraph $H$ of $G$, we denote by $c(H):=\sum_{e \in E(H)}c(e)$ the cost of $H$. 

In {\em Steiner tree reoptimization} we are given a triple $\langle I, S, I'\rangle$ where $I$ is an $\stp$ instance, $S$ is a $\rho$-approximate Steiner tree of $I$, and $I'$ is another $\stp$ instance which slightly differs from $I$. The problem asks to find a $\rho$-approximate Steiner tree of $I'$.

For a forest $F$ of $G$ (i.e., with $E(F)\subseteq E(G)$) and a set of edges $E'\subseteq E(G)$, we denote by $F+E'$ a (fixed) \underline{forest} yielded by the addition of all the edges in $E'\setminus E(F)$ to $F$, i.e., we scan all the edges of $E'$ one by one in any (fixed) order and we add the currently scanned edge to the forest only if the resulting graph remains a forest. Analogously, we denote by $F-E'$ the forest yielded by the removal of all the edges in $E'\cap E(F)$ from $F$. We use the shortcuts $F+e$ and $F-e$ to denote $F+\{e\}$ and $F-\{e\}$, respectively. Furthermore, if $F'$ is another forest of $G$ we also use the shortcuts $F+F'$ and $F-F'$ to denote $F+E(F')$ and $F-E(F')$, respectively. 
For a set of vertices $U \subseteq V(G)$, we define $F+U:=(V(F)\cup U, E(F))$ and use the shortcut $F+v$ to denote $F+\{v\}$. 

W.l.o.g., in this paper we tacitly assume that all the leaves of a Steiner tree are terminals: if a Steiner tree $S$ contains a leaf which is not a terminal, then we can always remove such a leaf from $S$ to obtain another Steiner tree whose cost is upper bounded by the cost of $S$. Analogously, we tacitly assume that all the leaves of a Steiner forest (i.e., a forest spanning all the terminals) are terminals.
A {\em full component} of a Steiner forest $F$ -- so a Steiner tree as well -- is a maximal (w.r.t. edge insertion) subtree of $F$ whose leaves are all terminals and whose internal vertices are all Steiner vertices. We observe that a Steiner forest can always be decomposed into full components. A {\em $k$-restricted} Steiner forest is a Steiner forest where each of its full components has at most $k$ terminals.

In the rest of the paper, for any superscript $s$ (the empty string is included), we denote by $I^s$ an $\stp$ instance (we tacitly assume that $I^s=\langle G^s,c^s,R^s\rangle$ as well as that $G^s$ has $n$ vertices), by $\opt^s$ a fixed optimal solution of $I^s$, by $d^s$ the shortest-path metric induced by $c^s$ in $G^s$ (i.e, $d^s(u,v)$ is equal to the cost of a shortest path between $u$ and $v$ in $G^s$ w.r.t. cost function $c^s$), by $K^s$ the complete graph on $V(G^s)$ with edge costs $d^s$, and by $K^s_n$ the complete graph obtained by adding $n-1$ copies of $K^s$ to $K^s$ and where the cost of an edge between a vertex and any of its copies is equal to 0. With a little abuse of notation, we use $d^s$ to denote the cost function of $K^s_n$. Moreover, we say that a subgraph $H$ of $K^s_n$ {\em spans} a vertex $v \in V(G^s)$ if $H$ spans -- according to the standard terminology used in graph theory -- any of the $n$ copies of $v$. Finally, with a little abuse of notation, we denote by $I_n^s$ the $\stp$ instance $I_n^s=\langle K_n^s, d^s, R^s\rangle$.

Let $I$ be an $\stp$ instance. We observe that any forest $F$ of $G$ is also a forest of $K$ such that $d(F)\leq c(F)$. Moreover, any forest $\bar F$ of $K$ can be transformed in polynomial time into a forest $F$ of $G$ spanning $V(\bar F)$ and such that $c(F)\leq d(\bar F)$. In fact, $F$ can be build from an empty graph on $V(G)$ by iteratively adding -- according to our definition of graph addition --  a shortest path between $u$ and $v$ in $G$, for every edge $(u,v) \in E(\bar F)$.

We also observe that any forest $F$ of $K$ can be viewed as a forest of $K_n$. Conversely, any forest $F$ of $K_n$ can be transformed in polynomial time into a forest of $K$ spanning $V(F)$ and having the same cost of $F$: it suffices to identify each of the 0-cost edges of $F$ between any vertex and any of its copies.
Therefore, we have polynomial-time algorithms that convert any forest $F$ of any graph in $\{G, K, K_n\}$ into a forest $F'$ of any graph in $\{G,K,K_n\}$ such that $F'$ always spans $V(F)$ and the cost of $F'$ (according to the cost function associated with the corresponding graph) is less than or equal to the cost of $F$ (according to the cost function associated with the corresponding graph). All these polynomial-time algorithms will be implicitly used in the rest of the paper; for example, we can say that a Steiner tree of $I_n$ is also a Steiner tree of $I$ or that a forest of $I_n$ that spans $v$ is also a forest of $I$ that spans $v$. However, we observe that some of these polynomial-time algorithms do no preserve useful structural properties; for example, when transforming a $k$-restricted Steiner tree of $I_n$ into a Steiner tree of $I$ we may lose the property that the resulting tree is $k$-restricted. Therefore, whenever we refer to a structural property of a forest, we  assume that such a property holds only w.r.t. the underlying graph the forest belongs to.

\section{General tools}\label{general_tools}

In this section we describe the three main tools that are used by all our algorithms. 

The first tool is the algorithm $\connect$ that has been introduced by B\"ockenhauer et al. in~\cite{DBLP:journals/tcs/BockenhauerHKMR09}. This algorithm takes an $\stp$ instance $I$ together with a Steiner forest $F$ of $I$ as inputs and computes a minimum-cost set of edges of $G$ whose addition to $F$ yields a Steiner tree of $I$. The algorithm $\connect$ reduces the $\stp$ instance to a smaller $\stp$ instance, where each terminal vertex corresponds to a tree of $F$, and then uses the famous Dreyfus-Wagner algorithm (see~\cite{DreyfusWagner1971}) to find an optimal solution of the reduced instance.\footnote{We refer to Hougardy~\cite{DBLP:journals/mpc/HougardySV17} for further exact algorithms for the $\stp$.} If $F$ contains $q$ trees, the running time of the algorithm $\connect$ is $O^*(3^q)$;\footnote{The $O^*$ notation hides polynomial factors.} this implies that the algorithm has polynomial running time when $q$ is constant. The call of the algorithm $\connect$ with input parameters $I$ and $F$ is denoted by $\connect(I,F)$.

The second tool is the algorithmic proof of Borchers and Du on $k$-restricted Steiner trees~\cite{DBLP:journals/siamcomp/BorchersD97} that has already been used in Steiner tree reoptimization in the recent paper of Goyal and M{\"{o}}mke~\cite{DBLP:conf/fsttcs/GoyalM15}. In their paper, Borchers and Du proved that, for every $\stp$ instance $I$ and every $\xi > 0$, there exists a $k$-restricted Steiner tree $S_{\xi}$ of $I_n$, with $k=2^{\lceil {1/\xi}\rceil}$, such that $d(S_{\xi}) \leq (1+\xi)c(\opt)$. As the following theorem shows, the algorithmic proof of Borchers and Du immediately extends to any (not necessarily optimal!) Steiner tree of $I_n$.

\begin{theorem}[\cite{DBLP:journals/siamcomp/BorchersD97}]\label{thm:BorchersDu}
Let $I$ be an $\stp$ instance, $S$ a Steiner tree of $I$, and $\xi > 0$ a real value. There is a polynomial time algorithm that computes a $k$-restricted Steiner tree $S_{\xi}$ of $I_n$, with $k=2^{\lceil {1/\xi}\rceil}$, such that $d(S_{\xi}) \leq (1+\xi) c(S)$.
\end{theorem}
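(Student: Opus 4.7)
The plan is to observe that Theorem~\ref{thm:BorchersDu} follows essentially for free from the original Borchers--Du construction once one notices that their proof never uses the optimality of the input Steiner tree; the extension is therefore more a matter of inspection than of new argument.

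First, I would lift $S$ to $I_n$. Since $d(e)\leq c(e)$ for every $e\in E(G)$, the tree $S$ is already a Steiner tree of $I_n$ of cost at most $c(S)$, so it suffices to build a $k$-restricted Steiner tree $S_{\xi}$ of $I_n$ whose cost in the metric $d$ is at most $(1+\xi)\,d(S)$. Next, I would apply the Borchers--Du transformation to $S$ one full component at a time. Recall that their construction takes a full component $C$ with $t$ terminals and, by a recursive halving strategy on the terminal sequence induced by a depth-first traversal of $C$, replaces $C$ by a collection of $k$-restricted full components that together span the same set of terminals and whose total cost is at most $\rho_k\cdot d(C)$, where $\rho_k=1+1/\lceil\log_2 k\rceil$ is the $k$-Steiner ratio. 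The crucial point is that this argument is \emph{local}: the blowup $\rho_k$ is charged to each individual full component from its own terminal structure, so it depends neither on the surrounding full components of $S$ nor on whether $S$ is globally optimal.

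Gluing the resulting $k$-restricted sub-forests back along the shared terminals (using, if needed, the $0$-cost copies available in $K_n$ to keep each new full component honestly terminal-leaved) yields a $k$-restricted Steiner tree $S_{\xi}$ of $I_n$ with
\[
d(S_{\xi})\ \leq\ \rho_k\cdot d(S)\ \leq\ \rho_k\cdot c(S).
\]
Choosing $k=2^{\lceil 1/\xi\rceil}$ gives $\lceil\log_2 k\rceil\geq 1/\xi$, hence $\rho_k\leq 1+\xi$ and $d(S_{\xi})\leq(1+\xi)\,c(S)$, as claimed. Polynomial running time follows because Borchers and Du's recursion on a single full component takes time polynomial in its number of terminals (with $k$ a constant) and $S$ has at most $O(n)$ full components.

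The main obstacle — really the only substantive point — is justifying the locality claim: one must reread the Borchers--Du proof and verify that the $\rho_k$ bound is obtained per full component from its own terminal layout, and that no step of their induction implicitly appeals to the global optimality of the input tree. This check is routine, but it is what turns their original statement about $\opt$ into the statement about an arbitrary Steiner tree $S$ needed here.
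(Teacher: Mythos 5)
Your proposal is correct and matches the paper's own treatment: the paper gives no proof for Theorem~\ref{thm:BorchersDu} beyond the remark that ``the algorithmic proof of Borchers and Du immediately extends to any (not necessarily optimal!) Steiner tree,'' which is precisely the locality observation you make and justify in more detail (the per--full--component halving argument and the bound $\rho_k \leq 1+1/\lceil 1/\xi\rceil \leq 1+\xi$ for $k=2^{\lceil 1/\xi\rceil}$ are exactly what is implicitly invoked). Your extra detail about lifting $S$ to $I_n$, charging the $\rho_k$ blowup component-by-component, and polynomial running time is sound and fills in what the paper leaves to the reader.
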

The call of Borchers and Du algorithm with input parameters $I$, $S$, and $\xi$, is denoted by $\borchersdu(I,S,\xi)$. By construction, the algorithm of Borchers and Du also guarantees the following properties: 
\begin{itemize}
\item[(a)] if $v$ is a Steiner vertex that has degree at least 3 in $S$, then $S_{\xi}$ spans $v$;
\item[(b)] if the degree of a terminal $t$ in $S$ is 2, then the degree of $t$ in $S_{\xi}$ is at most 4.
\end{itemize}
As shown in the following corollary, these two properties are useful if we want that a specific vertex of $S$ would also be a vertex of the $k$-restricted Steiner tree $S_{\xi}$ of $I_n$.
\begin{corollary}\label{cor:BorchersDu}
Let $I$ be an $\stp$ instance, $S$ a Steiner tree of $I$, $v$ a vertex of $S$, and $\xi > 0$ a real value. There is a polynomial time algorithm that computes a $k$-restricted Steiner tree $S_{\xi}$ of $I_n$, with $k=2^{2+\lceil {1/\xi}\rceil}$, such that $d(S_{\xi}) \leq (1+\xi) c(S)$ and $v \in V(S_{\xi})$.
\end{corollary}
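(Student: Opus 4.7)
The plan is to reduce the corollary to Theorem~\ref{thm:BorchersDu} by a small case analysis on the status of $v$ in $S$. Because we assume all leaves of $S$ are terminals, $v$ falls into exactly one of three cases: (i) $v\in R$; (ii) $v$ is a Steiner vertex of degree at least $3$ in $S$; (iii) $v$ is a Steiner vertex of degree exactly $2$ in $S$. In cases (i) and (ii) a direct invocation of Theorem~\ref{thm:BorchersDu} on $(I,S,\xi)$ already suffices: in case (i) the resulting $S_\xi$ spans $R\ni v$ for free, and in case (ii) property (a) guarantees $v\in V(S_\xi)$; the returned tree is $k$-restricted for $k=2^{\lceil 1/\xi\rceil}\le 2^{2+\lceil 1/\xi\rceil}$, and the cost bound $d(S_\xi)\le (1+\xi)c(S)$ carries over verbatim.

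The only work is case (iii). My idea is to artificially promote $v$ to a terminal: consider the auxiliary instance $I'=\langle G,c,R\cup\{v\}\rangle$. Note that $S$ is still a Steiner tree of $I'$ with $c(S)$ unchanged, and in $S$ the vertex $v$ is a terminal of $I'$ with degree exactly $2$. Now run $\borchersdu(I',S,\xi)$ to obtain a Steiner tree $S_\xi$ of $I'_n$ that is $k_0$-restricted with respect to $R\cup\{v\}$, where $k_0=2^{\lceil 1/\xi\rceil}$, and satisfies $d(S_\xi)\le (1+\xi)c(S)$. Since $v\in R\cup\{v\}$, $S_\xi$ spans $v$, so the first two conditions hold. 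The nontrivial point is to re-interpret $S_\xi$ as a Steiner tree of $I_n$ and argue that it is $k$-restricted for $k=2^{2+\lceil 1/\xi\rceil}=4k_0$.

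For this, I would invoke property (b) applied to the auxiliary instance $I'$: since $v$ has degree $2$ in $S$ and $v$ is a terminal of $I'$, the output degree of $v$ in $S_\xi$ is at most $4$. When we now view $S_\xi$ as a Steiner tree of $I$ (i.e.\ stop treating $v$ as a terminal), all full components of $S_\xi$ whose only $R\cup\{v\}$-terminal leaf involving $v$ is $v$ itself get merged through $v$ into a single full component with respect to $R$; at most $\deg_{S_\xi}(v)\le 4$ original $(R\cup\{v\})$-full components can merge, and each of them loses $v$ as a terminal leaf, so the merged $R$-full component has at most $4(k_0-1)<4k_0=k$ terminals of $R$. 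All other full components are unaffected and continue to have at most $k_0\le k$ terminals, so $S_\xi$ is $k$-restricted w.r.t. $R$, as required.

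The main obstacle is exactly the bookkeeping in case (iii): without property (b) we would have no control on how many full components can fuse at $v$ once we forget that $v$ was made a terminal, and the $k$-restricted property could be destroyed; the factor-$4$ jump from $k_0$ to $k=4k_0$ in the corollary is precisely the price paid for this fusion.
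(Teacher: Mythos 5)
Your proof is correct and follows exactly the same route as the paper's: the same three-way case split on the status of $v$ in $S$, the same auxiliary instance $I'=\langle G,c,R\cup\{v\}\rangle$ for the degree-$2$ Steiner case, and the same use of property (b) to cap the number of full components meeting at $v$ by $4$, giving $k=4k_0=2^{2+\lceil 1/\xi\rceil}$. You merely spell out the fusion bookkeeping (the $4(k_0-1)<4k_0$ count) a bit more explicitly than the paper does.
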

\begin{proof}
The cases in which $v \in R$ has already been proved in Theorem~\ref{thm:BorchersDu}. Moreover, the claim trivially holds by property (a) when $v$ is a Steiner vertex of degree greater than or equal to 3 in $S$. Therefore we assume that $v$ is a Steiner vertex of degree 2 in $S$. Let $I' = \langle G,c,R\cup\{v\}\rangle$. In this case, the call of $\borchersdu\big(I',S,\xi\big)$ outputs a $k'$-restricted Steiner tree $S_{\xi}$ of $I'_n$, with $k'=2^{\lceil {1/\xi}\rceil}$, such that $d(S_{\xi}) \leq (1+\xi) c(S)$. Since $v$ is a terminal in $I'$, by property (b) the degree of $v$ in $S_{\xi}$ is at most 4; in other words, $v$ is contained in at most 4 full components of $S_{\xi}$. Therefore, $S_{\xi}$ is a $k$-restricted Steiner tree of $I$, with $k=4 k'=2^{2+\lceil {1/\xi}\rceil}$.
\end{proof}
The call of Borchers and Du algorithm with parameters $I$, $S$, $\xi$, and $v$ is denoted by $\borchersdu(I,S,\xi,v)$.

The third tool, which is the novel idea of this paper, is the algorithm $\guessold$ (see Algorithm~\ref{alg:guessold} for the pseudocode). The algorithm $\guessold$ takes as inputs an $\stp$ instance $I$, a Steiner forest $F$ of $I_n$, and a positive integer value $h$. The algorithm computes a Steiner tree of $I$ by selecting, among several feasible solutions, the one of minimum cost. More precisely, the algorithm computes a feasible solution for each forest $\bar F$ that is obtained by removing from $F$ up to $h$ of its full components. The Steiner tree of $I$ associated with $\bar F$ is obtained by the call of $\connect\big(I_n,\bar F\big)$. If $F$ contains a number of full components which is strictly less than $h$, then the algorithm computes a minimum-cost Steiner tree of $I$ from scratch. The call of the algorithm $\guessold$ with input parameters $I$, $F$, and $h$ is denoted by $\guessold(I,F, h)$. The proof of the following lemma is immediate.
\begin{algorithm}[t!]
\footnotesize
\caption{The pseudocode of the algorithm $\guessold(I,F,h)$.}
	\label{alg:guessold}

	\DontPrintSemicolon
	{\bf if} $F$ contains less than $h$ full components {\bf then return} $\connect(I,(R,\emptyset))$;\;
	$S' \gets \perp$;\;
	\For{every set $H$ of up to $h$ full components of $F$}
	{
		$\bar F\gets F-H$;\;
		$\bar S \gets \connect\big(I_n,\bar F\big)$;\;
		{\bf if} $S' = \perp$ or $c(\bar S)< c(S')$ {\bf then} $S'\gets \bar S$;\;
	}
	\Return $S'$;
\end{algorithm}

\begin{lemma}\label{lm:merging_opts}
Let $I$ be an $\stp$ instance, $F$ a Steiner forest of $I_n$, $\bar F$ a Steiner forest of $I_n$ that is obtained from $F$ by removing up to $h$ of its full components, and $H$ a minimum-cost subgraph of $K_n$ such that $\bar F+H$ is a Steiner tree of $I$. Then the cost of the Steiner tree of $I$ returned by the call of $\guessold(I,F,h)$ is at most $c(\bar F)+c(H)$.
\end{lemma}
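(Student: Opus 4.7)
The plan is to unwind the definition of the algorithm $\guessold$ and argue by a simple case analysis depending on whether $F$ has at least $h$ full components. Since $\bar F$ is obtained from $F$ by removing at most $h$ full components, it is -- in the ``interesting'' branch of the algorithm -- one of the forests explicitly enumerated by the outer \textbf{for} loop.

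First I would handle the main case in which $F$ contains at least $h$ full components. In this case, when the for loop considers the set of full components whose removal from $F$ yields exactly $\bar F$, it invokes $\connect(I_n,\bar F)$. By the specification of $\connect$ recalled earlier in the paper, this call returns a Steiner tree of $I$ of the form $\bar F + H^{\star}$, where $H^{\star}$ is a \emph{minimum}-cost subgraph of $K_n$ whose addition to $\bar F$ produces a Steiner tree of $I$. Since $H$ is a feasible choice with the same property, we have $c(H^{\star})\le c(H)$, so the candidate $\bar S$ constructed in this iteration satisfies $c(\bar S)\le c(\bar F)+c(H^{\star})\le c(\bar F)+c(H)$. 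Because the algorithm returns the cheapest candidate $S'$, we conclude $c(S')\le c(\bar S)\le c(\bar F)+c(H)$.

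Second I would handle the degenerate case in which $F$ contains strictly fewer than $h$ full components. Here the algorithm short-circuits on line~1 and returns $\connect(I,(R,\emptyset))$, i.e., a minimum-cost Steiner tree of $I$ computed from scratch. Since $\bar F + H$ is, by hypothesis, a Steiner tree of $I$, and the definition of the ``$F+E'$'' operation only adds edges of $E'$ that are not already in $F$ and preserve the forest property, we have $c(\bar F+H)\le c(\bar F)+c(H)$. Hence an optimum Steiner tree of $I$ costs at most $c(\bar F)+c(H)$, and the same bound holds for $S'$.

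I do not expect any real obstacle here: the statement is essentially a direct consequence of the fact that the for loop enumerates every way of removing up to $h$ full components from $F$ and that $\connect$ is an \emph{exact} procedure returning a cheapest augmenting subgraph. The only point requiring a moment of care is the edge case on line~1 of the pseudocode, which is dispatched by the trivial observation that recomputing an optimum Steiner tree from scratch cannot be more expensive than the specific feasible solution $\bar F + H$ given by the hypothesis.
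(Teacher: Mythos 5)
Your proof is correct and matches the argument the paper leaves implicit (it only remarks that the proof is ``immediate''). You correctly split on the guard at line~1, observe that in the main branch the outer loop explicitly enumerates the removal set yielding $\bar F$ and that $\connect(I_n,\bar F)$ is exact, and in the degenerate branch observe that $\bar F+H$ witnesses a feasible Steiner tree of $I$, so $\connect(I,(R,\emptyset))$ cannot return anything more expensive.
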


\noindent Moreover, the following lemmas hold.

\begin{lemma}\label{lm:full_connected_components}
Let $I$ be an $\stp$ instance and $F$ a forest of $G$ spanning $R$. If $F$ contains $q$ trees, then every minimal (w.r.t. the deletion of entire full components) subgraph $H$ of any graph in $\{G,K,K_n\}$ such that $F+H$ is a Steiner tree of $I$ contains at most $q-1$ full components.
\end{lemma}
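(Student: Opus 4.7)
I plan to reduce the bound to a counting argument on the Steiner tree $T := F+H$ after contracting each tree of $F$. Let $T'$ denote the tree obtained from $T$ by contracting each of the $q$ trees of $F$ to a single ``super-vertex'' $\tau_1,\dots,\tau_q$: contracting pairwise-disjoint connected subtrees of a tree yields a tree, so $T'$ is a tree whose other vertices are the Steiner vertices in $V(T)\setminus V(F)$, and whose edges are exactly the edges of $H$ that appear in $T$. Because $T$ has no Steiner leaves, neither does $T'$, so every leaf of $T'$ is a super-vertex. Treating the super-vertices as the ``terminals'' of $T'$ I will consider its full-component decomposition, and denote by $p^\circ$ the number of full components of $T'$.

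\textbf{Step 1: $p^\circ \leq q-1$.} Let $d_j := \deg_{T'}(\tau_j)$. Each super-vertex $\tau_j$ is a leaf of exactly $d_j$ full components of $T'$, so $\sum_j d_j$ counts all (full-component, super-leaf) incidences; since every full component of $T'$ contains at least one edge and therefore at least two leaves (all necessarily super-vertices), this incidence count is at least $2p^\circ$. On the other hand, the standard tree identity $p^\circ = 1 + \sum_j (d_j-1)$ for the number of full components of a tree yields $\sum_j d_j = p^\circ + q - 1$. Combining, $p^\circ \leq q - 1$.

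\textbf{Step 2: $p \leq p^\circ$.} For each full component $C_i$ of $H$, let $K_i$ be the sub-forest of $T'$ consisting of those $T'$-edges that originate in $C_i$. Any Steiner vertex appearing in $K_i$ is internal to $C_i$ in $H$, and since full components of $H$ never share a Steiner internal, all of its incident $T'$-edges lie in $K_i$; combined with the absence of Steiner leaves in $T'$, this forces its $K_i$-degree to be at least $2$. Hence every connected component of $K_i$ is a subtree of $T'$ with super-leaves and Steiner internal vertices, and a short maximality check at the super-vertex leaves (extending would turn a super-vertex into an internal vertex, forbidden in a full component of $T'$) shows that every such component is in fact a full component of $T'$. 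Minimality of $H$ guarantees each $K_i$ is non-empty: if some $C_i$ contributed no edge to $T$, then $F + (H - C_i) = T$ would still be a Steiner tree, contradicting the minimality of $H$. Therefore the $p$ full components of $H$ produce at least $p$ distinct full components of $T'$, yielding $p \leq p^\circ \leq q-1$.

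\textbf{Main difficulty.} The most delicate step will be in Step~2, where I have to verify that each connected component of $K_i$ is a \emph{maximal} subtree of $T'$ with super-leaves and Steiner internals and is therefore a full component of $T'$; this relies on combining the uniqueness of Steiner internals per full component of $H$ (to rule out Steiner leaves of the component), the absence of Steiner leaves in $T'$, and the rigidity of super-vertex leaves (to rule out extensions across super-vertex boundaries).
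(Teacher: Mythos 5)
Your proposal is correct in its conclusion, but it takes a genuinely different route from the paper's. The paper argues directly from minimality: each full component of $H$ must straddle at least two distinct trees of $F$, so adding the full components of $H$ to $F$ one at a time strictly decreases the number of connected components at every step; starting from $q$ components and ending at $1$, there can be at most $q-1$ full components. You instead contract each tree of $F$ to a super-vertex, bound the number of full components of the quotient tree $T'$ by $q-1$ via the degree identity $p^\circ = 1 + \sum_j(d_j-1)$, and then inject the full components of $H$ into those of $T'$ through the edge sets $K_i$. Your route is considerably longer but also more explicit about why the bound holds (the paper's ``strictly decreases at each iteration'' leaves implicit why a suitable ordering exists). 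One imprecision in your Step~2: the claim that every connected component of $K_i$ is itself a full component of $T'$ can fail, because an \emph{internal Steiner vertex} of $C_i$ may also lie in $V(F)$; after contraction such a vertex becomes a super-vertex that can have degree at least $2$ inside $K_i$, making the super-vertex \emph{internal} to a component of $K_i$ rather than a leaf. This does not invalidate the final step, however: your Steiner-degree propagation argument still shows that each $K_i$ is a non-empty \emph{union} of full components of $T'$, and pairwise edge-disjointness of the $K_i$'s is all you need to conclude $p \leq p^\circ$. Stating the Step~2 claim as ``each non-empty $K_i$ is a union of, and hence contains at least one, full component of $T'$'' would close this gap cleanly.
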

\begin{proof}
Since $H$ is minimal, each full component of $H$ contains at least 2 vertices each of which respectively belongs to one of two distinct trees of $F$. Therefore, if we add the full components of $H$ to $F$ one after the other, the number of connected components of the resulting graph strictly decreases at each iteration because of the minimality of $H$. Hence, $H$ contains at most $q-1$ full components.
\end{proof}

\begin{lemma}\label{lm:guessold}
Let $I$ be an $\stp$ instance and $F$ a Steiner forest of $I_n$ of $q$ trees. If each tree of $F$ is $k$-restricted, then the call of $\guessold(I,F,h)$ outputs a Steiner tree of $I$ in time $O^*(|R|^h 3^{q+hk})$.
\end{lemma}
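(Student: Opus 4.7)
The plan is to verify correctness of the output and the running-time bound separately. Correctness is immediate: both branches of $\guessold$ invoke $\connect$, which by construction returns a Steiner tree of its input instance, and any Steiner tree of $I_n$ is also a Steiner tree of $I$ via the polynomial-time conversions discussed in Section~\ref{definition_notation}. The heart of the argument is therefore the running-time analysis, which breaks into three pieces: bounding the number $f$ of full components of $F$, bounding the number of trees of each $\bar F=F-H$, and handling the pruning branch.

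The first step is to show $f\leq |R|$. Fix a tree $T$ of $F$ with $t_T$ terminals and $f_T$ full components, and form the bipartite graph $G_T$ whose vertex set is (full components of $T$) $\cup$ (terminals of $T$), joined by an edge $(C,v)$ whenever $v\in C$. I claim $G_T$ is a tree. Connectivity is inherited from $T$. For acyclicity: any cycle $C_1-t_1-C_2-t_2-\cdots-C_m-t_m-C_1$ in $G_T$ would yield a closed walk in $T$ by concatenating, for each $i$ (mod $m$), the $t_{i-1}$-to-$t_i$ path inside $C_i$; these paths are pairwise edge-disjoint since full components partition $E(T)$, so the walk would be a genuine cycle in $T$, a contradiction. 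Counting vertices and edges of the tree $G_T$ gives
\[ f_T + t_T - 1 \;=\; \sum_{C\subseteq T} c_C, \]
where $c_C$ is the number of terminals of the full component $C$. Since $c_C\geq 2$ whenever $T$ has at least one edge (and $c_C=1$ only in the degenerate case where $T$ is a single isolated terminal), summing over the trees of $F$ gives $f\leq |R|$. The complementary bound $c_C\leq k$ from $k$-restrictedness, plugged into the same identity and summed, yields $|R|\leq (k-1)f+q$, which will be used in the pruning branch.

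The second step bounds the number of trees of $\bar F=F-H$. Removing a single full component $C$ from the current forest splits the tree containing it into exactly $c_C\leq k$ pieces (one per terminal-leaf of $C$, since these are the points where adjacent full components attach), hence increases the tree count by at most $k-1$. Iterating at most $h$ times, $\bar F$ has at most $q+h(k-1)\leq q+hk$ trees, so each call $\connect(I_n,\bar F)$ costs $O^*(3^{q+hk})$. The main loop iterates over $\sum_{i=0}^{h}\binom{f}{i}=O(f^h)=O(|R|^h)$ subsets $H$, so its total cost is $O^*(|R|^h\cdot 3^{q+hk})$, matching the claim.

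Finally, when $F$ has fewer than $h$ full components, the complementary inequality above gives $|R|\leq (k-1)h+q\leq q+hk$, so the Dreyfus--Wagner-based call $\connect(I,(R,\emptyset))$ on the trivial forest of $|R|$ isolated terminals runs in $O^*(3^{|R|})=O^*(3^{q+hk})$, again within the stated bound. The one non-routine ingredient is the incidence-tree claim; once $G_T$ is recognized to be a tree, both directions $f\leq|R|$ and $|R|\leq(k-1)f+q$ drop out of the single vertex/edge identity, and the remaining analysis is a direct counting of trees in $\bar F$ together with the known running time of $\connect$.
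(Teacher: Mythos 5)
Your proof is correct and follows the same outline as the paper's: handle the pruning branch by bounding $|R|$ in terms of $q$, $h$, $k$; bound the number of loop iterations via the number of full components of $F$; and bound the cost of each $\connect$ call by bounding the number of trees of $\bar F$ by $q+hk$. The paper simply asserts the three underlying counting facts (at most $|R|$ full components; $\bar F$ has at most $q+hk$ trees; in the pruning branch $|R|\leq q+hk$), whereas you derive all of them cleanly from the single identity $f_T+t_T-1=\sum_C c_C$ obtained by recognizing the full-component/terminal incidence graph of each tree as a tree — a nice, self-contained way to supply the details the paper leaves implicit.
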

\begin{proof}
If $F$ contains at most $h$ full components, then the number of terminals is at most $q+hk$ and the call of $\connect(I,(R,\emptyset))$ outputs a Steiner tree of $I$ in time $O^*(3^{q+hk})$. Therefore, we assume that $F$ contains a number of full components that is greater than or equal to $h$.
In this case, the algorithm $\guessold(I,F,h)$ computes a feasible solution for each forest that is obtained by removing up to $h$ full components of $F$ from itself. As the number of full components of $F$ is at most $|R|$, the overall number of feasible solutions evaluated by the algorithm is $O(|R|^{h+1})$. Let $\bar F$ be any forest  that is obtained from $F$ by removing $h$ of its full components. Since $F$ is $k$-restricted, $\bar F$ contains at most $q+hk$ trees. Therefore the call of $\connect(I_n, \bar F)$ requires $O^*(3^{q+hk})$ time to output a solution. Hence, the overall time needed by the call of $\guessold(I,F,h)$ to output a solution is $O^*(|R|^h3^{q+hk})$.
\end{proof}

For the rest of the paper, we denote by $\langle I, S, I'\rangle$ an instance of the Steiner tree reoptimization problem. Furthermore, we also assume that $\rho \leq 2$ as well as $\epsilon \leq 1$, as otherwise we can use classical time-efficient algorithms to compute a 2-approximate solution of $I'$.

\section{A Steiner vertex becomes a terminal}\label{R_plus}

In this section we consider the local modification in which a Steiner vertex $t \in V(G)\setminus R$ becomes a terminal. Clearly, $G'=G$, $c'=c$, $d'=d$, whereas $R'=R\cup \{t\}$. Therefore, for the sake of readability, we will drop the superscripts from $G'$, $c'$, and $d'$. 

Let $\xi= \epsilon/10$ and $h =2^{2\lceil 2/\epsilon \rceil \lceil 1/\xi\rceil}$.
The algorithm, whose pseudocode is reported in Algorithm~\ref{alg:R_plus}, computes a Steiner tree of $I'$ first by calling $\borchersdu(I,S,\xi)$ to obtain a $k$-restricted Steiner tree $S_{\xi}$ of $I_n$, with $k=2^{\lceil {1/\xi}\rceil}$, and then by calling $\guessold(I',S_{\xi}+t,h)$. Intuitively, the algorithm {\em guesses} the full components of $S_{\xi}$ to be replaced with the corresponding full components of a suitable $k'$-restricted version of a new (fixed) optimal solution, for suitable values of $k'$.

\begin{algorithm}[h]
\footnotesize
\caption{A Steiner vertex $t \in V(G)\setminus R$ becomes a terminal.}
	\label{alg:R_plus}

	\DontPrintSemicolon
	$\xi \gets \epsilon/10$;\;
	$h \gets 2^{2\lceil 2/\epsilon \rceil \lceil 1/\xi\rceil}$;\;
	$S_{\xi} \gets \borchersdu(I,S,\xi)$;\;
	$S' \gets \guessold(I',S_{\xi}+t,h)$;\;
	\Return $S'$;\;
\end{algorithm}

\begin{theorem}\label{thm:R_plus}
Let $\langle I, S, I' \rangle$ be an instance of Steiner tree reoptimization, where $S$ is a $\rho$-approximate solution of $I$ and $I'$ is obtained from $I$ by changing the status of a single vertex from Steiner to terminal. Then Algorithm~\ref{alg:R_plus} computes a $(\rho+\epsilon)$-approximate Steiner tree of $I'$ in polynomial time.
\end{theorem}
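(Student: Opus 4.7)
The plan is to establish separately the polynomial running time and the $(\rho+\epsilon)$-approximation guarantee.

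For the running time, note that $S_\xi + t$ is a Steiner forest of $I'_n$ with at most two trees (namely $S_\xi$ and, if $t \notin V(S_\xi)$, the isolated vertex $t$). Since $k = 2^{\lceil 1/\xi \rceil}$ and $h$ are constants depending only on $\epsilon$, Lemma~\ref{lm:guessold} guarantees that $\guessold(I', S_\xi + t, h)$ runs in polynomial time, and $\borchersdu(I,S,\xi)$ is polynomial by Theorem~\ref{thm:BorchersDu}.

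For the approximation, I will first bound $d(S_\xi)$. Since $R \subset R'$, every Steiner tree of $I'$ is also a Steiner tree of $I$, so $c(\opt) \leq c(\opt')$; combined with $c(S) \leq \rho c(\opt)$ and Theorem~\ref{thm:BorchersDu}, this yields $d(S_\xi) \leq (1+\xi)\rho c(\opt')$. I then apply Theorem~\ref{thm:BorchersDu} to $\opt'$ and $I'$ to obtain a $k$-restricted Steiner tree $\opt'_\xi$ of $I'_n$ with $d(\opt'_\xi) \leq (1+\xi) c(\opt')$.

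Setting $\ell := \lceil 2/\epsilon \rceil$, I will exhibit $\ell$ feasible solutions $S_1, \dots, S_\ell$ of $I'$, each lying in the family considered by $\guessold$, and bound their average cost. Each $S_i$ corresponds to a pair $(H_i, H'_i)$ where $H_i$ is a set of at most $h$ full components of $S_\xi$ and $H'_i$ is a minimal set of full components of $\opt'_\xi$ such that $(S_\xi + t) - H_i + H'_i$ is a Steiner tree of $I'_n$; Lemma~\ref{lm:merging_opts} then gives $c(S') \leq d(S_\xi) - d(H_i) + d(H'_i)$. I will construct the pairs inductively, with $H'_0 := \emptyset$: given $H'_{i-1}$, I define $H_i$ as a function of $\opt'_\xi$ and $H'_{i-1}$ engineered to guarantee the cost-domination $d(H_i) \geq d(H'_{i-1})$ while keeping $|H_i| \leq h$; given $H_i$, I take $H'_i$ to be a minimal subset of $\opt'_\xi$'s full components that reconnects $(S_\xi + t) - H_i$ into a Steiner tree of $I'_n$, whose size is controlled by Lemma~\ref{lm:full_connected_components}.

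With these pairs, cost-domination makes the sum telescope: $\sum_{i=1}^\ell (d(H'_i) - d(H_i)) \leq \sum_{i=1}^\ell (d(H'_i) - d(H'_{i-1})) = d(H'_\ell) \leq d(\opt'_\xi) \leq (1+\xi) c(\opt')$. Averaging the $\ell$ bounds yields $c(S') \leq d(S_\xi) + \frac{1}{\ell}\sum_i (d(H'_i) - d(H_i)) \leq (1+\xi)\rho c(\opt') + (1+\xi) c(\opt')/\ell$, and the parameter choices $\xi = \epsilon/10$, $\ell \geq 2/\epsilon$, and $\rho \leq 2$ reduce this to $(\rho + \epsilon) c(\opt')$. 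The main obstacle is the inductive construction of the pairs $(H_i, H'_i)$, in particular the simultaneous maintenance of $|H_i| \leq h$ and $d(H_i) \geq d(H'_{i-1})$: this is where the choice $h = k^{2\ell}$ becomes necessary to absorb the geometric growth of $|H_i|$ across iterations, and where the joint $k$-restrictedness of $S_\xi$ and $\opt'_\xi$ is crucially exploited.
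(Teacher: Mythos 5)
Your overall strategy — expand $S$ by Borchers--Du, call $\guessold$ on $S_\xi+t$, construct a family of candidate swaps $(H_i, H'_i)$ from the two restricted trees, and make the bound telescope — is the same as the paper's. But the crucial inequality you assert, the \emph{cost-domination} $d(H_i) \geq d(H'_{i-1})$, is not achievable and is not what the paper proves. What the paper actually derives (from the fact that $S'_\xi - H'_{i-1} + H_i$ is a Steiner tree of $I_n$, hence has cost at least $c(\opt)$, together with $d(S'_\xi) \leq (1+\xi)c(\opt')$) is the weaker inequality
\begin{equation*}
d(H_i) \;\geq\; d(H'_{i-1}) - \Delta_{\opt}, \qquad \text{where } \Delta_{\opt} := (1+\xi)c(\opt') - c(\opt) \geq 0,
\end{equation*}
and $\Delta_{\opt}$ can be large when $c(\opt) \ll c(\opt')$. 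A concrete instance where your stronger claim fails: take $R = \{r\}$, so $\opt = \{r\}$ and $c(\opt) = 0$, and let $t$ be far from $r$. Then $S_\xi = \{r\}$, $d(S_\xi)=0$; with $H'_0 := \emptyset$ one gets $H_1 = \emptyset$ and $H'_1$ carrying the whole $r$--$t$ path of $S'_\xi$, and then $H_2 = \emptyset$ again while $d(H'_1) \approx d(S'_\xi) > 0$, so $d(H_2) \geq d(H'_1)$ is false.

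Once the slack $\Delta_{\opt}$ appears, your telescoping gives $c(S') \leq d(S_\xi) + \frac{1}{\ell}\bigl(d(H'_\ell) - d(H'_0)\bigr) + \Delta_{\opt}$, and with your bound $d(S_\xi) \leq \rho(1+\xi)c(\opt')$ the residual $+\Delta_{\opt}$ is not absorbed: for $c(\opt)$ small this costs an extra $\approx (1+\xi)c(\opt')$ and the approximation ratio blows up to roughly $\rho+1$. The missing idea is the paper's sharper bound on $d(S_\xi)$, namely
\begin{equation*}
d(S_\xi) \;\leq\; (1+\xi)c(S) \;\leq\; \rho(1+\xi)\,c(\opt) \;\leq\; \rho(1+\xi)^2 c(\opt') - \Delta_{\opt},
\end{equation*}
which makes the $\Delta_{\opt}$ terms cancel exactly when substituted into $c(S_i) \leq d(S_\xi) - d(H_i) + d(H'_i)$, leaving $c(S_i) \leq \rho(1+\xi)^2 c(\opt') - d(H'_{i-1}) + d(H'_i)$ and hence the clean telescoping sum. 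You also start with $H'_0 = \emptyset$ whereas the paper sets $H'_0$ to a full component of $S'_\xi$ spanning $t$; this matters for the inductive size bounds via Lemma~\ref{lm:full_connected_components} (the base case must account for the piece of $S'_\xi$ that covers the new terminal $t$), but it is a secondary issue next to the missing $\Delta_{\opt}$ cancellation.
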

\begin{proof}
Theorem~\ref{thm:BorchersDu} implies that the computation of $S_{\xi}$ requires polynomial time. Since $S_{\xi}$ is a Steiner tree of $I_n$, $S_{\xi}+t$ is a Steiner forest of $I'_n$ of at most two trees (observe that $S_{\xi}$ may already contain $t$). Therefore, using Lemma~\ref{lm:guessold} and the fact that $h$ is a constant value, the call of $\guessold(I',S_{\xi}+t,h)$ outputs a solution $S'$ in polynomial time. Hence, the overall time complexity of Algorithm~\ref{alg:R_plus} is polynomial. 

In the following we show that Algorithm~\ref{alg:R_plus} returns a $(\rho+\epsilon)$-approximate solution. Let $S'_{\xi}$ denote the Steiner tree of $I'_n$ that is returned by the call of $\borchersdu(I',\opt',\xi)$.
Let $H'_0$ be a full component of $S'_{\xi}$ that spans $t$ (ties are broken arbitrarily). For every $i=1,\dots,\ell$, with $\ell= \lceil 2/\epsilon \rceil$, we define:
\begin{itemize}
\item $H_i$ as the forest consisting of a minimal set of full components of $S_{\xi}$ whose addition to $S'_{\xi}-H'_{i-1}$ yields a Steiner tree of $I_n$ (see Figure~\ref{fig:steiner_to_terminal} (b) and (d));
\item $H'_i$ as the forest consisting of a minimal set of full components of $S'_{\xi}$ whose addition to $S_{\xi}-H_i$ yields a Steiner tree of $I'_n$ (see Figure~\ref{fig:steiner_to_terminal} (c)).
\end{itemize}

For the rest of the proof, we denote by $S_i$ the Steiner tree of $I'$ yielded by the addition of $H'_i$ to $S_{\xi}-H_i$. 

\begin{figure}[t]
	\centering
	\includegraphics[scale=0.6]{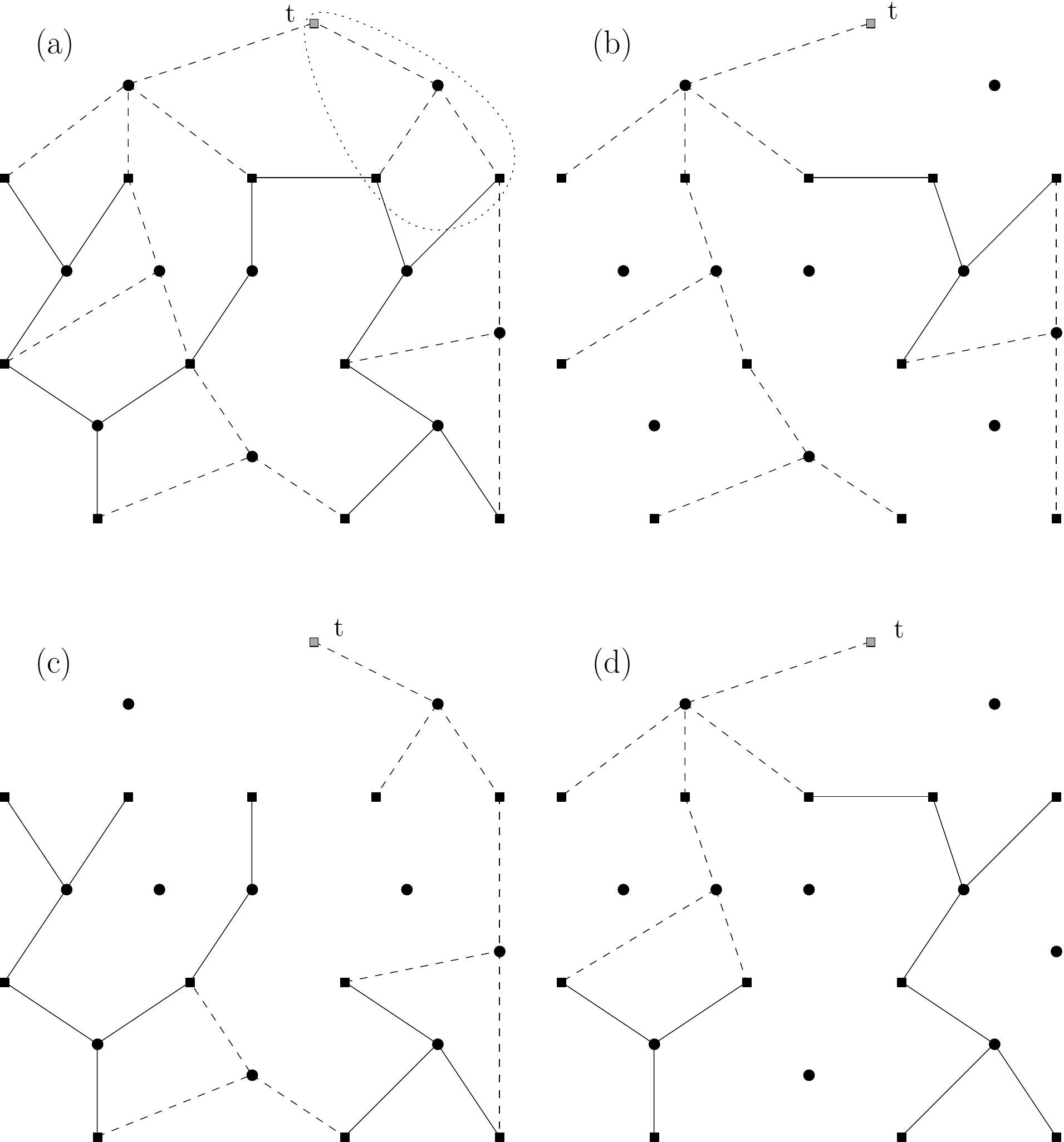}
	\caption{An example that shows how $H_i$ and $H'_i$ are defined. In (a) it is depicted a Steiner tree reoptimization instance where square vertices represent the terminals, while circle vertices represent the Steiner vertices. The Steiner vertex $t$ of $I$ that becomes a terminal in $I'$ is represented as a terminal vertex of grey color. Solid edges are the edges of $S_{\xi}$, while dashed edges are the edges of $S'_{\xi}$. The full component $H'_0$ of $S'_{\xi}$ is highlighted in (a). The forests $S'_{\xi}-H'_0$ and $H_1$ are shown in (b). The forests $S_{\xi}-H_1$ and $H'_1$ are shown in (c). Finally, the forests $S'_{\xi}-H'_1$ and $H_2$ are shown in (d).}
    \label{fig:steiner_to_terminal}
\end{figure}

Let $k=2^{\lceil {1/\xi}\rceil}$ and observe that both $S_{\xi}$ and $S'_{\xi}$ are $k$-restricted Steiner trees of $I_n$ and $I'_n$, respectively (see Theorem~\ref{thm:BorchersDu}). Therefore $H'_0$ spans at most $k$ terminals. Furthermore, by repeatedly using Lemma~\ref{lm:full_connected_components}, $H_i$ contains at most $k^{2i-1}$ full components and spans at most $k^{2i}$ terminals, while $H'_i$ contains at most $k^{2i}$ full components and spans at most $k^{2i+1}$ terminals. 
This implies that the number of the full components of $H_i$, with $i=1,\dots,\ell$, is at most
\begin{equation*}
k^{2i-1} \leq k^{2\ell}= 2^{2\lceil 2/\epsilon \rceil \lceil {1/\xi}\rceil} = h.
\end{equation*}
Therefore, by Lemma~\ref{lm:merging_opts}, the cost of the solution returned by the call of $\guessold(I',S_{\xi}+t,h)$ is at most $c(S_i)$. As a consequence
\begin{equation}\label{eq:R_plus_sixth}
c(S') \leq \frac{1}{\ell}\sum_{i=1}^\ell c(S_i).
\end{equation}

Now we prove an upper bound to the cost of each $S_i$.
Let $\Delta_{\opt}=(1+\xi)c(\opt')-c(\opt)$. As any Steiner tree of $I'$ is also a Steiner tree of $I$, $c(\opt')\geq c(\opt)$; therefore $\Delta_{\opt}\geq 0$. Since the addition of $H_{i}$ to $S'_{\xi}-H'_{i-1}$ yields a Steiner tree of $I$, the cost of this tree is lower bounded by the cost of $\opt$. As a consequence, using Theorem~\ref{thm:BorchersDu} in the second inequality that follows, $c(\opt)\leq d(S'_{\xi}) - d(H'_{i-1}) + d(H_{i})\leq (1+\xi)c(\opt') - d(H'_{i-1}) + d(H_{i})$, i.e., 
\begin{equation}\label{eq:R_plus_first}
d(H_{i}) \geq  d(H'_{i-1}) - \Delta_{\opt}. 
\end{equation}
Using Theorem~\ref{thm:BorchersDu} in the first inequality that follows we have that
\begin{equation}\label{eq:R_plus_second}
d(S_{\xi}) 	\leq (1+\xi)c(S) \leq \rho(1+\xi)c(\opt)\leq\rho(1+\xi)^2 c(\opt')-\Delta_{\opt}.
\end{equation}
Using both~(\ref{eq:R_plus_first}) and~(\ref{eq:R_plus_second}) in the second inequality that follows, we can upper bound the cost of $S_i$ with
\begin{equation}\label{eq:R_plus_fourth}
c(S_i) 	\leq d(S_{\xi})-d(H_i)+d(H'_i) \leq \rho (1+\xi)^2c(\opt')-d(H'_{i-1})+d(H'_i).
\end{equation}
Observe that the overall sum of the last two terms on the right-hand side of~(\ref{eq:R_plus_fourth}) for every $i\in\{1,\dots,\ell\}$ is a telescoping sum. As a consequence, if for every $i\in\{1,\dots,\ell\}$ we substitute the term $c(S_i)$ in~(\ref{eq:R_plus_sixth}) with the corresponding upper bound in~(\ref{eq:R_plus_fourth}), and use Theorem~\ref{thm:BorchersDu} to derive that $d(H'_\ell) \leq d(S'_{\xi}) \leq (1+\xi)c(\opt')$, we obtain
\begin{equation*}
c(S')\leq \frac{1}{\ell}\sum_{i=1}^\ell c(S_i) \leq  \rho(1+\xi)^2 c(\opt')+\frac{1+\xi}{\ell} c(\opt')\leq (\rho+\epsilon)c(\opt'),
\end{equation*}
where last inequality holds by the choice of $\xi$ and $\ell$, and because $\rho \leq 2$ and $\epsilon \leq 1$. This completes the proof.
\end{proof}

\section{The cost of an edge increases}\label{E_plus}

In this section we consider the local modification in which the cost of an edge $e=(u,v) \in E(G)$ increases by $\Delta > 0$. More formally, we have that $G'=G$, $R'=R$, whereas, for every $e' \in E(G)$,
\[
c'(e')=
	\begin{cases}
		c(e) + \Delta	&	\text{if $e'=e$;}\\
		c(e')			&	\text{otherwise.}
	\end{cases}
\]
For the sake of readability, in this section we drop the superscripts from $G'$ and $R'$. Furthermore, we denote by $c_{-e}$ the edge-cost function $c$ restricted to $G-e$ and by $d_{-e}$ the corresponding shortest-path metric.

Let $\xi = \epsilon/10$ and $h = 2^{2(1+\lceil 1/\xi \rceil )\lceil 2/\epsilon\rceil}$. The algorithm, whose pseudocode is reported in Algorithm~\ref{alg:E_plus}, first checks whether $e \in E(S)$. If this is not the case, then the algorithm returns $S$. If $e \in E(S)$, then let $S_u$ and $S_v$ denote the two trees of $S-e$ containing $u$ and $v$, respectively. The algorithm first computes a $k$-restricted Steiner tree $S_{u,\xi}$ (resp., $S_{v,\xi}$), with $k=2^{2+\lceil 1/\xi\rceil}$, of $S_u$ (resp., $S_v$) such that $u \in V(S_{u,\xi})$ (resp., $v \in V(S_{v,\xi})$). Then the algorithm computes a solution $S'$ via the call of $\guessold(I',S_{u,\xi}+S_{v,\xi},h)$, and, finally, it returns the cheapest solution between $S$ and $S'$. 
As we will see, the removal of $e$ from $S$ is necessary to guarantee that the cost of the processed Steiner forest of $I'$ (i.e., $S-e$) is upper bounded by the cost of $\opt'$.
\begin{algorithm}[h]
\footnotesize
\caption{The cost of an edge $e=(u,v)$ increases by $\Delta > 0$.}
	\label{alg:E_plus}

	\DontPrintSemicolon
	$\xi \gets \epsilon/10$;\;
	$h \gets 2^{2(1+\lceil 1/\xi \rceil )\lceil 2/\epsilon\rceil}$;\;
	\If{$e \not \in E(S)$}
	{
		$S' \gets S$;
	}
	\Else
	{
		let $S_u$ and $S_v$ be the tree of $S-e$ containing $u$ and $v$, respectively;\;
		$I_u \gets \langle G-e, c_{-e}, V(S_u)\cap R\rangle$;\;
		$I_v \gets \langle G-e, c_{-e}, V(S_v)\cap R\rangle$;\;
		$S_{u,\xi} \gets \borchersdu(I_u, S_u,\xi,u)$;\;
		$S_{v,\xi} \gets \borchersdu(I_v, S_v,\xi,v)$;\;
		$S_{\xi}\gets S_{u,\xi}+S_{v,\xi}$;\;
		$S' \gets \guessold(I',S_{\xi},h)$;\;
		{\bf if} $c'(S) \leq c'(S')$ {\bf then} $S' \gets S$;\;
	}
	\Return $S'$;\;
\end{algorithm}
\begin{theorem}
Let $\langle I, S, I' \rangle$ be an instance of Steiner tree reoptimization, where $S$ is a $\rho$-approximate solution of $I$ and $I'$ is obtained from $I$ by increasing the cost of a single edge. Then Algorithm~\ref{alg:E_plus} computes a $(\rho+\epsilon)$-approximate Steiner tree of $I'$ in polynomial time.
\end{theorem}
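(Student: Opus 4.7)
The proof follows the blueprint of Theorem~\ref{thm:R_plus}, with adjustments for two genuine differences: the cost functions $c,c'$ (hence the shortest-path metrics $d,d'$) disagree on $e$, and the starting forest $S_\xi = S_{u,\xi}+S_{v,\xi}$ has up to two trees rather than one. If $e \notin E(S)$ the algorithm returns $S$, and since $\opt'$ is feasible for $I$ and $c' \geq c$ pointwise, $c'(S) = c(S) \leq \rho\,c(\opt) \leq \rho\,c(\opt') \leq \rho\,c'(\opt')$. In the remaining case, Corollary~\ref{cor:BorchersDu} and Lemma~\ref{lm:guessold} immediately yield polynomial running time: $S_\xi$ is a Steiner forest of $I'_n$ with at most two trees, both $k$-restricted for the constant $k = 2^{2+\lceil 1/\xi\rceil}$, and $h$ depends only on $\epsilon$.

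For the main case $e \in E(S)$, I would set $S'_\xi := \borchersdu(I',\opt',\xi)$, a $k'$-restricted Steiner tree of $I'_n$ with $k' = 2^{\lceil 1/\xi\rceil}$ and $d'(S'_\xi) \leq (1+\xi) c'(\opt')$, and then mimic the sequence construction of Theorem~\ref{thm:R_plus}, working \emph{entirely inside} $I'_n$ to avoid mediating between $d$ and $d'$. Let $H'_0$ be a minimal set of full components of $S'_\xi$ whose addition to $S_\xi$ yields a Steiner tree of $I'_n$; since $S_\xi$ has at most two trees, Lemma~\ref{lm:full_connected_components} gives $|H'_0| \leq 1$. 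For $i=1,\dots,\ell := \lceil 2/\epsilon\rceil$, define $H_i$ (resp.\ $H'_i$) as a minimal set of full components of $S_\xi$ (resp.\ $S'_\xi$) whose addition to $S'_\xi - H'_{i-1}$ (resp.\ $S_\xi - H_i$) yields a Steiner tree of $I'_n$, and let $S_i$ be the corresponding Steiner tree of $I'$. An induction on $i$ that repeatedly invokes Lemma~\ref{lm:full_connected_components} with the $k$- and $k'$-restrictedness of $S_\xi$ and $S'_\xi$ shows $|H_i| \leq (kk')^i \leq h$ for every $i \leq \ell$, so Lemma~\ref{lm:merging_opts} gives $c'(S') \leq \tfrac{1}{\ell}\sum_{i=1}^\ell c'(S_i)$.

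The cost of each $S_i$ is controlled by two estimates. Because $\borchersdu$ is invoked on the instances $I_u, I_v$ whose edge set is $G-e$, the edges of $S_\xi$ never traverse $e$ and their $d'$-costs are therefore bounded by the $d_{-e}$-costs certified by Borchers and Du; combining this with $c(S-e) \leq c(S)$ and $c(\opt) \leq c(\opt') \leq c'(\opt')$ yields
\begin{equation*}
d'(S_\xi) \leq (1+\xi)\,c(S-e) \leq \rho(1+\xi)\,c(\opt) \leq \rho(1+\xi)^2 c'(\opt') - \Delta_{\opt},
\end{equation*}
where $\Delta_{\opt} := (1+\xi) c'(\opt') - c(\opt) \geq \xi\,c'(\opt')$. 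Second, since $(S'_\xi - H'_{i-1}) + H_i$ is a Steiner tree of $I'_n$ of $d'$-cost at least $c'(\opt')$ and $d'(S'_\xi) \leq (1+\xi) c'(\opt')$, one gets $d'(H_i) \geq d'(H'_{i-1}) - \xi\,c'(\opt')$. Plugging both into $c'(S_i) \leq d'(S_\xi) - d'(H_i) + d'(H'_i)$ and using $-\Delta_{\opt} + \xi\, c'(\opt') \leq 0$ to absorb the slack gives
\begin{equation*}
c'(S_i) \leq \rho(1+\xi)^2 c'(\opt') - d'(H'_{i-1}) + d'(H'_i).
\end{equation*}
Averaging over $i$ the last two terms telescope against $d'(H'_\ell) \leq d'(S'_\xi) \leq (1+\xi) c'(\opt')$, so $c'(S') \leq \rho(1+\xi)^2 c'(\opt') + \tfrac{1+\xi}{\ell} c'(\opt') \leq (\rho+\epsilon) c'(\opt')$, exactly as at the end of the proof of Theorem~\ref{thm:R_plus}. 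Since the algorithm outputs the cheaper of $S$ and $S'$, the approximation bound is preserved.

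The main technical obstacle, absent from Theorem~\ref{thm:R_plus} where $d = d'$, is the mismatch between the two cost functions when lower-bounding $d'(H_i)$. My plan sidesteps it by conducting the whole telescoping inside $I'_n$: the slightly weaker slack $\xi\,c'(\opt')$ (in place of $\Delta_{\opt}$) that this introduces is exactly offset by the extra saving $-\Delta_{\opt}$ in the upper bound on $d'(S_\xi)$, a saving that has now become nontrivial due to the gap $c(\opt) \leq c(\opt')$.
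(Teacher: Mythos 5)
Your plan---run the telescoping entirely in $I'_n$ and never touch the edge $e$---breaks at the point where you define the sets $H_i$. Recall that $S_\xi = S_{u,\xi}+S_{v,\xi}$ is a forest of two \emph{disjoint} trees, one per side of $e$, and you define $H_i$ as a minimal set of full components of $S_\xi$ whose addition to $S'_\xi - H'_{i-1}$ yields a Steiner tree of $I'_n$. Such an $H_i$ need not exist, because even the union of \emph{all} of $S_\xi$ with $S'_\xi - H'_{i-1}$ can be disconnected. Concretely, let $R=\{a,b,c,d\}$, $S = a$--$u$--$b$ and $c$--$v$--$d$ joined by $e=(u,v)$ (so $S_{u,\xi}$ spans only $\{a,b\}$ and $S_{v,\xi}$ only $\{c,d\}$), and $\opt' = S'_\xi = a$--$b$--$c$--$d$ with full components $\{a,b\},\{b,c\},\{c,d\}$. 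Then $H'_0=\{b,c\}$, and $(S'_\xi - H'_0)+S_\xi$ has two connected pieces, $\{a,b,u\}$ and $\{c,d,v\}$; no subset of full components of $S_\xi$ can bridge them, so $H_1$ is undefined and your inequality ``$(S'_\xi-H'_{i-1})+H_i$ has $d'$-cost at least $c'(\opt')$'' has no content. This is precisely why the paper defines $H_i$ so that $(S'_\xi-H'_{i-1})+H_i+e$ is a Steiner tree: adding $e$ is the only guaranteed bridge between the two trees of $S_\xi$.

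The fix is not merely syntactic, and it forces the telescoping back into the \emph{original} instance. If you add $e$ to the definition of $H_i$ but still lower-bound by $c'(\opt')$ in $I'_n$, the tree's $d'$-cost picks up a term $d'(u,v)$ that can be as large as $c'(e)=c(e)+\Delta$, and the resulting slack contains an additive $\Delta$ that the savings $\Delta_{\opt}$ and $(1+\xi)c(e)$ do not control (take $\Delta$ huge while $e\notin E(\opt')$). The paper instead treats $(S'_\xi-H'_{i-1})+H_i+e$ as a Steiner tree of $I_n$ and lower-bounds by $c(\opt)$, so the offending term is $d(e)\le c(e)$ rather than $c'(e)$; it also applies $\borchersdu$ to $\opt'$ on $\langle G-e,c_{-e},R\rangle$ so that all Borchers--Du bounds are stated in the single metric $d_{-e}$, which dominates both $d$ and $d'$. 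That choice in turn needs $e\notin E(\opt')$, which is why the paper disposes of the subcase $e\in E(\opt')$ with a separate short calculation---another branch your writeup omits. Your averaging step and the observation $\Delta_{\opt}\ge\xi\,c'(\opt')$ are correct and would close the argument \emph{if} the $H_i$'s existed; the genuinely missing idea is that the two-tree structure of $S_\xi$ forces $e$ into the construction and therefore forces the lower bound to be taken against $\opt$ in $I_n$.
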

\begin{proof}
Corollary~\ref{cor:BorchersDu} implies that the computation of both $S_{u,\xi}$ and $S_{v,\xi}$ requires polynomial time. Therefore, using Lemma~\ref{lm:guessold} and the fact that $h$ is a constant value, the call of $\guessold(I',S_{\xi},h)$ outputs a solution $S'$ in polynomial time. Hence, the overall time complexity of Algorithm~\ref{alg:E_plus} is polynomial. 

In the following we show that Algorithm~\ref{alg:E_plus} returns a $(\rho+\epsilon)$-approximate solution. Observe that the local modification does not change the set of feasible solutions, i.e., a tree is a Steiner tree of $I$ iff it is a Steiner tree of $I'$. This implies that $\opt'$ is a Steiner tree of $I$ as well. Thanks to this observation, if $e \not\in E(S)$  we have that
\[
c'(S) = c(S) \leq \rho c(\opt) \leq \rho c(\opt') \leq \rho c'(\opt'),
\]
i.e., the solution returned by the algorithm is a $\rho$-approximate solution.
Moreover, if $e \in E(\opt')$, then $c'(\opt')=c(\opt')+\Delta$ or, equivalently, $c(\opt') = c'(\opt')-\Delta$. As a consequence, 
\[
c'(S) \leq c(S)+\Delta \leq \rho c(\opt) + \Delta \leq \rho c(\opt') + \Delta \leq \rho c'(\opt'),
\]
i.e., once again, the solution returned by the algorithm is a $\rho$-approximate solution.
Therefore, in the rest of the proof, we assume that $e \in E(S)$ as well as $e \not \in E(\opt')$. 

Let $S'_{\xi}$ denote the Steiner tree of $\langle K_n,d_{-e},R \rangle$ that is returned by the call of  $\borchersdu(\langle G-e,c_{-e},R\rangle, \opt',\xi)$. Using Theorem~\ref{thm:BorchersDu} and the fact that $e \not \in E(\opt')$ we have that 
\begin{equation}\label{eq:E_plus_extra}
d_{-e}(S'_{\xi}) \leq (1+\xi)c_{-e}(\opt') = (1+\xi)c'(\opt').
\end{equation}

\noindent Let $H'_0$ be a full component of $S'_{\xi}$ whose addition to $S_{\xi}$ yields a Steiner tree of $I'_n$ (ties are broken arbitrarily). For every $i=1,\dots,\ell$, with $\ell= \lceil 2/\epsilon \rceil$, we define:
\begin{itemize}
\item $H_i$ as the forest consisting of a minimal set of full components of $S_{\xi}$ such that the addition of $H_i$ and $e$ to $S'_{\xi}-H'_{i-1}$ yields a Steiner tree of $I_n$;\footnote{We observe that the existence of $H_i$ is guarantee by the fact that $S_{\xi}+e$ is a Steiner tree of $I_n$: in fact, $S_{\xi}=S_{u,\xi} + S_{v,\xi}$ is a forest of two trees such that $u \in S_{u,\xi}$ and $v \in S_{v,\xi}$.}

\item $H'_i$ as the forest consisting of a minimal set of full components of $S'_{\xi}$ whose addition to $S_{\xi}-H_{i}$ yields a Steiner tree of $I'_n$.
\end{itemize}

For the rest of the proof, we denote by $S_i$ the Steiner tree of $I'$ obtained by augmenting $S_{\xi}-H_{i}$ with $H'_i$. Observe that $S_i$ does not contain $e$. Therefore
\begin{equation}\label{eq:E_plus_zero}
c'(S_i) \leq c_{-e}(S_i) \leq d_{-e}(S_{\xi})-d_{-e}(H_{i})+d_{-e}(H'_i).
\end{equation}

Let $k=2^{2+\lceil 1/\xi \rceil }$ and $r=2^{\lceil 1/\xi \rceil}$. Observe that $S_{\xi}$ is a $k$-restricted Steiner forest of $I_n$ (see Corollary~\ref{cor:BorchersDu}), while $S'_{\xi}$ is an $r$-restricted Steiner tree of $I'_n$ (see Theorem~\ref{thm:BorchersDu}). Therefore, $H'_0$ spans at most $r$ terminals. Moreover, by repeatedly using Lemma~\ref{lm:full_connected_components}, $H_i$ contains at most $k^{i-1}r^{i}$ full components and spans at most $k^{i} r^{i}$ terminals, while $H'_i$ contains at most $k^{i}r^{i}$ full components and spans at most $k^{i} r^{i+1}$ terminals.
This implies that the number of full components of $H_i$, with $i=1,\dots,\ell$, is at most
\begin{equation*}
k^{i-1} r^{i} \leq k^{\ell} r^\ell=2^{2(1+\lceil 1/\xi \rceil )\lceil 2/\epsilon\rceil}= h.
\end{equation*}

Therefore, by Lemma~\ref{lm:merging_opts}, the cost of the solution returned by the call of $\guessold(I',S_{\xi},h)$ is at most the cost of $S_i$. As a consequence
\begin{equation}\label{eq:E_plus_sixth}
c'(S') \leq \frac{1}{\ell}\sum_{i=1}^\ell c'(S_i).
\end{equation}

Now we prove an upper bound to the cost of each $S_i$, with $i\in \{1,\dots,\ell\}$. Let $\Delta_{\opt}=(1+\xi)c'(\opt')-c(\opt)$ and observe that $\Delta_{\opt}\geq 0$.
Since the addition of $H_{i}$ and $e$ to $S'_{\xi}-H'_{i-1}$ yields a Steiner tree of $I$, the cost of this tree is lower bounded by the cost of $\opt$. As a consequence, using~(\ref{eq:E_plus_extra}) together with  Theorem~\ref{thm:BorchersDu} in the third inequality that follows, 
\begin{align*}
c(\opt)	 &\leq d(S'_{\xi}-H'_{i-1})+d(H_i)+d(e)
		  \leq d_{-e}(S'_{\xi}-H'_{i-1})+d_{-e}(H_i)+c(e)\\
		 &= d_{-e}(S'_{\xi}) - d_{-e}(H'_{i-1}) + d_{-e}(H_{i})+c(e) \\
		 &\leq (1+\xi)c'(\opt') - d_{-e}(H'_{i-1}) + d_{-e}(H_{i})+c(e),
\end{align*}
from which we derive
\begin{equation}\label{eq:E_plus_first}
d_{-e}(H_{i}) \geq  d_{-e}(H'_{i-1}) - \Delta_{\opt} - c(e). 
\end{equation}
Using Corollary~\ref{cor:BorchersDu} twice in the first inequality that follows we have that
\begin{align}\label{eq:E_plus_second}
d_{-e}(S_{\xi})	& = d_{-e}(S_{u,\xi}) +d_{-e}(S_{v,\xi}) \leq (1+\xi)c_{-e}(S_u)+(1+\xi)c_{-e}(S_v) \notag\\
				& = (1+\xi)c(S)-(1+\xi)c(e) \leq \rho(1+\xi)c(\opt)-c(e) \notag\\
				& \leq \rho(1+\xi)^2c'(\opt')-\Delta_\opt-c(e).
\end{align}
Starting from~(\ref{eq:E_plus_zero}) and using both~(\ref{eq:E_plus_first}) and~(\ref{eq:E_plus_second}) in the second inequality that follows, we can upper bound the cost of $S_i$, for every $i \in \{1,\dots,\ell\}$, with
\begin{equation}\label{eq:E_plus_fourth}
c'(S_i) 	\leq  d_{-e}(S_{\xi})-d_{-e}(H_{i})+d_{-e}(H'_i) \leq \rho(1+\xi)^2c'(\opt')-d_{-e}(H'_{i-1})+d_{-e}(H'_{i}).
\end{equation}

As a consequence, if for every $i\in\{1,\dots,\ell\}$ we substitute the term $c'(S_i)$ in~(\ref{eq:E_plus_sixth}) with the corresponding upper bound  in~(\ref{eq:E_plus_fourth}), and use~(\ref{eq:E_plus_extra}) to derive $d_{-e}(H'_\ell) \leq d_{-e}(S'_{\xi}) \leq (1+\xi)c'(\opt')$, we obtain
\begin{equation*}
c'(S')\leq \frac{1}{\ell}\sum_{i=1}^\ell c'(S_i) \leq \rho(1+\xi)^2 c'(\opt')+\frac{1+\xi}{\ell} c'(\opt')\leq (\rho+\epsilon)c'(\opt'),
\end{equation*}
where last inequality holds by the choice of $\xi$ and $\ell$, and because $\rho \leq 2$ and $\epsilon \leq 1$. This completes the proof.
\end{proof}

\section{A terminal becomes a Steiner vertex}\label{R_minus}

In this section we consider the local modification in which a terminal vertex $t \in R$ becomes a Steiner vertex. More precisely, the input instance satisfies the following: $G'=G$, $c'=c$, whereas $R'=R\setminus \{t\}$. Therefore, for the sake of readability, in this section we drop the superscripts from $c'$, $d'$, and $G'$. We recall that, unless $\rho=1$, the problem is as hard to approximate as the $\stp$~\cite{DBLP:conf/fsttcs/GoyalM15}. Therefore, in this section we assume that $S=\opt$ is an optimal solution, i.e., $\rho=1$.

Let $\xi = \epsilon/10$ and $h = (1+\lceil 1/\xi\rceil)2^{2(1+\lceil 1/\xi\rceil)\lceil 2/\epsilon\rceil}$.
As $c(\opt')$ may be much smaller than $c(\opt)$, we first need to remove a path from $\opt$ to ensure that the obtained Steiner forest of $I'$ has a constant number of trees and a cost upper bounded by $(1+\xi) c(\opt')$. The algorithm, whose pseudocode is reported in Algorithm~\ref{alg:R_minus}, precomputes a maximal (w.r.t. hop-length) path $P^*$ of $\opt$ having $t$ as one of its endvertices and containing up to $1+\lceil 1/\xi \rceil$ Steiner vertices whose corresponding degrees in $\opt$ are all greater than or equal to 3. As the following lemma shows, $P^*$ is exactly the path we are searching for.

\begin{algorithm}[t]
\footnotesize
\caption{A terminal $t$ of $G$ becomes a Steiner vertex.}
	\label{alg:R_minus}

	\DontPrintSemicolon
	$\xi \gets \epsilon/10$;\;
	$h \gets (1+\lceil 1/\xi\rceil)2^{2(1+\lceil 1/\xi\rceil)\lceil 2/\epsilon\rceil}$;\;
	{let $P^*$ be a maximal (w.r.t. hop-length) path of $\opt$ starting from $t$ and containing up to $1+\lceil 1/\xi \rceil$ Steiner vertices whose corresponding degrees in $\opt$ are all greater than or equal to 3};\;
	let $S_{\xi}$ be the empty graph;\;
	\For{every $v \in V(P^*)\cap V(\opt-P^*)$}
	{	
		let $S_v$ be the tree of $\opt-P^*$ that spans $v$;\;
		$S_{v,\xi} \gets \borchersdu(I,S_v,\xi,v)$;\;
		$S_{\xi} \gets S_{\xi} + S_{v,\xi}$;\;
	}
	$S' \gets \guessold(I',S_{\xi},h)$;\;
	{\bf if} $c(\opt) \leq c(S')$ {\bf then} \Return $\opt$ {\bf else} \Return $S'$;
\end{algorithm}

\begin{lemma}\label{lm:R_minus}
$\opt-P^*$ is a Steiner forest of $I'$ with at most $1+\lceil 1/\xi\rceil$ trees and $c(P^*) \geq c(\opt)-(1+\xi)c(\opt')$.
\end{lemma}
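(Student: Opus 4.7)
I split the proof into the tree-count claim and the cost claim; both rely on a fine analysis of $P^*$ inside the tree $\opt$.

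For the tree count, my starting point is the structural fact that $\opt - P^*$ is a subforest of the tree $\opt$ whose vertex set contains $R \supseteq R'$, so it is a Steiner forest of $I'$. I will observe that every connected component of $\opt - P^*$ contains exactly one vertex of $P^*$: the unique $\opt$-path between any two vertices of $P^*$ is the corresponding subpath of $P^*$, so removing $E(P^*)$ disconnects them, while in the other direction any component must meet $V(P^*)$ since $\opt$ is connected. I will then canonicalise $\opt - P^*$ as a Steiner forest (dropping isolated Steiner vertices and iteratively pruning Steiner leaves, as per the paper's convention) and argue that the surviving trees are indexed by path vertices $v\in V(P^*)$ carrying at least one $\opt$-edge off $P^*$ or acting as a singleton terminal in $R'$. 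Such a $v$ is either a branching Steiner vertex, a branching terminal, or the non-$t$ endpoint of $P^*$. The budget $1+\lceil 1/\xi\rceil$ directly bounds the branching Steiner vertices; for the non-$t$ endpoint I will invoke maximality of $P^*$ and do the two-case analysis of why it cannot be extended further---either the endpoint is a leaf of $\opt$ (hence a terminal contributing one singleton), or the budget is saturated and every off-path neighbour of the endpoint is a branching Steiner vertex outside $P^*$---to absorb the endpoint's contribution into the $1+\lceil 1/\xi\rceil$ total.

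For the cost claim, I will apply Theorem~\ref{thm:BorchersDu} to $(I',\opt',\xi)$ to obtain $\opt'_{\xi}$, a $k$-restricted Steiner tree of $I'_n$ with $d(\opt'_{\xi})\le (1+\xi)c(\opt')$. Augmenting $\opt'_{\xi}$ with a shortest $G$-path from $t$ to $V(\opt'_{\xi})$ produces a Steiner tree of $I$ of cost at most $(1+\xi)c(\opt')+d(t,V(\opt'_{\xi}))$. Since we are in the regime $\rho=1$, the tree $S=\opt$ is optimal, so $c(\opt)\le (1+\xi)c(\opt')+d(t,V(\opt'_{\xi}))$, i.e.\ $d(t,V(\opt'_{\xi}))\ge c(\opt)-(1+\xi)c(\opt')$. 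It then suffices to establish $c(P^*)\ge d(t,V(\opt'_{\xi}))$, for which I will show that $P^*$ necessarily visits some vertex of $V(\opt'_{\xi})\supseteq R'$: if the non-$t$ endpoint of $P^*$ is a leaf of $\opt$ it is already a terminal in $R'$, and in the budget-saturated case the same endpoint analysis used for the tree count forces $P^*$ to have already encountered a terminal of $R'$ before being stopped. Since $P^*$ is a simple path in $\opt\subseteq G$ from $t$ to such a vertex $w\in V(\opt'_{\xi})$, $c(P^*)$ is at least the $G$-distance from $t$ to $w$, giving the required inequality.

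The hard part will be the endpoint case analysis, which is what actually drives both claims. When maximality of $P^*$ is enforced by the budget rather than by reaching a leaf of $\opt$, I need to exploit the fact that every off-path neighbour of the endpoint is a branching Steiner vertex to either charge the extra component it creates against the saturated budget (for the tree count) or guarantee that $P^*$ has passed through a vertex of $R'$ on the way (for the cost bound); doing this carefully, and making the two cases interlock with the Borchers--Du approximation, is where the real work of the proof lies.
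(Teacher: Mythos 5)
The tree-count part of your proposal is roughly in the spirit of the paper (which just asserts it "by definition of $P^*$''), but the cost bound contains a genuine gap. Your plan is to derive $d(t,V(\opt'_{\xi}))\ge c(\opt)-(1+\xi)c(\opt')$ by augmenting $\opt'_{\xi}$ with a shortest $t$-to-$\opt'_{\xi}$ path, and then finish by claiming $c(P^*)\ge d(t,V(\opt'_{\xi}))$ because $P^*$ ``necessarily visits some vertex of $V(\opt'_{\xi})\supseteq R'$.'' This last claim is false in the budget-saturated case. Maximality of $P^*$ only tells you that the non-$t$ endpoint $w$ either is a leaf of $\opt$ or has all its off-path neighbours being branching Steiner vertices while the budget of $1+\lceil 1/\xi\rceil$ is already exhausted; in the second case nothing forces $P^*$ to have passed through any vertex of $R'$. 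Indeed, one easily builds an optimal tree $\opt$ where $t$ is followed on $P^*$ only by Steiner vertices, each of which has a separate subtree dangling off it containing the terminals, and $P^*$ stops deep inside the Steiner part. The paper's own proof explicitly handles exactly this subcase (``$P^*$ does not span another terminal other than $t$''), which is a strong signal that your dichotomy is missing a case.

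The paper's fix in that subcase is a pigeonhole/averaging argument that your proposal does not contain. Because the budget is saturated and $P^*$ has no $R'$-terminal, $\opt-P^*$ has $1+\lceil 1/\xi\rceil$ trees, giving $1+\lceil 1/\xi\rceil$ pairwise edge-disjoint paths in $\opt$ from $V(P^*)$ to $R'$. Taking the cheapest such path $P$, the two inequalities $c(\opt)\ge c(P^*)+(1+\lceil 1/\xi\rceil)c(P)$ (these pieces are edge-disjoint in $\opt$) and $c(\opt)\le c(\opt')+c(P^*)+c(P)$ (since $\opt'+P^*+P$ spans $R$) together force $c(P)\le\xi c(\opt')$, and only then does $c(\opt)\le(1+\xi)c(\opt')+c(P^*)$ follow. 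Without this averaging step your inequality $c(P^*)\ge d(t,V(\opt'_{\xi}))$ is unjustified, and the cost claim does not go through. As a minor remark, the detour through Borchers--Du to form $\opt'_{\xi}$ is also unnecessary here: comparing directly against $\opt'$ gives $c(\opt)\le c(\opt')+d(t,V(\opt'))$, which is only tighter, so the $(1+\xi)$ slack should be spent on the pigeonhole argument, not on the restricted tree.
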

\begin{proof}
By definition of $P^*$, $\opt-P^*$ is a Steiner forest of $I'$ with at most $1+\lceil 1/\xi\rceil$ trees (we recall that only trees of the forest spanning one or more terminals are kept; therefore, Steiner vertices of $P^*$ having degree 2 in $S$ are not contained in $\opt-P^*$).

To show that $c(P^*) \geq c(\opt)-(1+\xi)c(\opt')$, we divide the proof into two complementary cases, depending on whether $P^*$ spans other terminals other than $t$.

We consider the case in which $P^*$ spans other terminals other than $t$. In this case $\opt'+P^*$ is a Steiner tree of $I$. Therefore, $c(\opt)\leq c(\opt')+c(P^*)$, from which we derive $c(P^*) \geq c(\opt)-c(\opt')\geq c(\opt)-(1+\xi)c(\opt')$.

Now, we consider the case in which $P^*$ does not span another terminal other than $t$. In this case $P^*$ contains exactly $1+\lceil 1/\xi\rceil$ Steiner vertices, each of which has degree greater than or equal to 3 in $\opt$. Therefore, $\opt-P^*$ yields a Steiner forest of $I'$ of exactly $1+\lceil 1/\xi\rceil$ trees. This implies that  $\opt-P^*$ contains at least $1+\lceil 1/\xi \rceil $ pairwise edge-disjoint paths connecting vertices of $P^*$ with terminals in $R'$. Let $P$ be any of these paths that minimizes its cost (ties are broken arbitrarily). As the addition of both $P^*$ and $P$ to $\opt'$ yields a Steiner tree of $I$, $(1+\lceil 1/\xi\rceil )c(P)+c(P^*)\leq c(\opt)\leq c(\opt')+c(P)+c(P^*)$, from which we derive $c(P)\leq \xi c(\opt')$. As a consequence $c(\opt)\leq c(\opt')+c(P)+c(P^*) \leq c(\opt')+\xi c(\opt')+c(P^*)$, i.e., $c(P^*) \geq c(\opt)-(1+\xi)c(\opt')$.
\end{proof}

The algorithm then computes a $k$-restricted Steiner forest $S_{\xi}$ of $\opt-P^*$, with $k=2^{2+\lceil 1/\xi \rceil}$, such that $S_{\xi}+P^*$ yields a Steiner tree of $I$. Finally, the algorithm calls $\guessold(I',S_{\xi},h)$ to compute a feasible solution $S'$ and returns the cheapest solution between $\opt$ and $S'$.

\begin{theorem}\label{thm:R_minus}
Let $\langle I, \opt, I' \rangle$ be an instance of Steiner tree reoptimization, where $\opt$ is an optimal solution of $I$ and $I'$ is obtained from $I$ by changing the status of a single vertex from terminal to Steiner. Then Algorithm~\ref{alg:R_minus} computes a $(1+\epsilon)$-approximate Steiner tree of $I'$ in polynomial time.
\end{theorem}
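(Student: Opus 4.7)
The plan is to mimic the proofs of Theorem~\ref{thm:R_plus} and its analogue in Section~\ref{E_plus}, with two modifications: the Steiner forest $S_\xi$ starts with up to $q_0 := 1+\lceil 1/\xi\rceil$ trees (rather than one or two), and the input is assumed optimal, so $\rho=1$ throughout. The path $P^*$ can be produced by a simple greedy walk in $\opt$ starting at $t$ in polynomial time; each call to $\borchersdu$ is polynomial by Corollary~\ref{cor:BorchersDu}, and there are at most $q_0$ of them. Since $S_\xi$ is a $k$-restricted Steiner forest of $I'_n$ with $k=2^{2+\lceil 1/\xi\rceil}$ and at most $q_0$ trees (all constants), Lemma~\ref{lm:guessold} together with the fact that $h$ is a constant implies that the call $\guessold(I',S_\xi,h)$ also runs in polynomial time.

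For the cost bound, combining Corollary~\ref{cor:BorchersDu} applied to each tree $S_v$ of $\opt-P^*$ with Lemma~\ref{lm:R_minus} gives
\[
d(S_\xi) \leq (1+\xi)\,c(\opt-P^*) = (1+\xi)\,\bigl(c(\opt)-c(P^*)\bigr) \leq (1+\xi)^2\,c(\opt').
\]
Let $S'_\xi$ be the $r$-restricted Steiner tree of $I'_n$ returned by $\borchersdu(I',\opt',\xi)$ with $r=2^{\lceil 1/\xi\rceil}$, so $d(S'_\xi)\leq (1+\xi)\,c(\opt')$. Define $H'_0$ to be a minimal set of full components of $S'_\xi$ whose addition to $S_\xi$ yields a Steiner tree of $I'_n$; by Lemma~\ref{lm:full_connected_components}, $|H'_0| \leq q_0-1$. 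For $i=1,\dots,\ell$ with $\ell := \lceil 2/\epsilon\rceil$, let $H_i$ be a minimal set of full components of $S_\xi$ whose addition to $S'_\xi-H'_{i-1}$ yields a Steiner tree of $I'_n$, let $H'_i$ be a minimal set of full components of $S'_\xi$ whose addition to $S_\xi-H_i$ yields a Steiner tree of $I'_n$, and set $S_i := S_\xi - H_i + H'_i$. Iterating Lemma~\ref{lm:full_connected_components} as in Sections~\ref{R_plus} and~\ref{E_plus}, but with the initial contribution $|H'_0|\leq q_0-1$ replacing the single starting full component of those sections, I expect to obtain $|H_i| \leq h$ for every $i \in \{1,\dots,\ell\}$, so Lemma~\ref{lm:merging_opts} yields $c(S') \leq c(S_i)$ for each $i$, hence $c(S')\leq \frac{1}{\ell}\sum_{i=1}^\ell c(S_i)$.

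The core inequality comes from the fact that $S'_\xi-H'_{i-1}+H_i$ is a Steiner tree of $I'_n$, so its cost is at least $c(\opt')$; therefore
\[
d(H_i)\geq c(\opt') - d(S'_\xi) + d(H'_{i-1}) \geq d(H'_{i-1}) - \xi\,c(\opt').
\]
Plugging this together with $d(S_\xi)\leq (1+\xi)^2 c(\opt')$ into $c(S_i) \leq d(S_\xi)-d(H_i)+d(H'_i)$ gives
\[
c(S_i) \leq (1+\xi)^2 c(\opt') + \xi\,c(\opt') + d(H'_i) - d(H'_{i-1}).
\]
The last two terms telescope when averaged, and $d(H'_\ell)\leq d(S'_\xi)\leq (1+\xi)c(\opt')$, so
\[
c(S') \leq \Big((1+\xi)^2 + \xi + \frac{1+\xi}{\ell}\Big)\,c(\opt') \leq (1+\epsilon)\,c(\opt'),
\]
where the last inequality follows from the choices $\xi=\epsilon/10$ and $\ell=\lceil 2/\epsilon\rceil$ together with $\epsilon\leq 1$.

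The main obstacle is verifying the size bound $|H_i|\leq h$ in the presence of the extra factor $q_0$ coming from the $q_0$ initial trees of $S_\xi$: the recurrence becomes roughly $|H_{i+1}|\leq r k\,|H_i| + r q_0$ with $|H_1|\leq r(q_0-1)$, and unwinding it gives $|H_\ell|\leq q_0(rk)^\ell$, which is exactly the constant $h=(1+\lceil 1/\xi\rceil)\,2^{2(1+\lceil 1/\xi\rceil)\lceil 2/\epsilon\rceil}$ chosen by the algorithm, thereby justifying the extra $(1+\lceil 1/\xi\rceil)$ factor relative to the parameter in Section~\ref{E_plus}. A secondary subtlety is that the $\Delta_{\opt}$-style bookkeeping of Sections~\ref{R_plus} and~\ref{E_plus} relied on $c(\opt)\leq c(\opt')$, which here fails; exploiting $\rho=1$ to lower-bound $d(H_i)$ directly via $c(\opt')$ rather than $c(\opt)$ sidesteps this cleanly, and is the reason the case $\rho>1$ is genuinely out of reach.
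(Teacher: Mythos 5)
Your argument has a genuine gap in the definition of $H_i$. You define $H_i$ as a minimal set of full components of $S_\xi$ whose addition to $S'_\xi-H'_{i-1}$ yields a Steiner tree of $I'_n$, \emph{without} adding $P^*$. But such an $H_i$ need not exist: unlike in Sections~\ref{R_plus} and~\ref{E_plus}, here $S_\xi$ is a forest with up to $1+\lceil 1/\xi\rceil$ trees (it is a restricted version of $\opt-P^*$, not of a Steiner tree), and $S'_\xi-H'_{i-1}$ is a forest as well. Even adding \emph{all} full components of $S_\xi$ to $S'_\xi-H'_{i-1}$ can leave the graph disconnected — for instance when both forests separate the terminals into the same two groups. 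The paper defines $H_i$ so that $S'_\xi-H'_{i-1}+H_i+P^*$ is a Steiner tree of the \emph{original} instance $I_n$; the path $P^*$ is what guarantees existence (since $S_\xi+P^*$ is a Steiner tree of $I_n$, so each tree of $S'_\xi-H'_{i-1}$ attaches to it through a shared terminal).

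You cannot simply reinstate $P^*$ and keep your $c(\opt')$-based lower bound, because then the inequality becomes $d(H_i)\geq d(H'_{i-1})-\xi c(\opt')-c(P^*)$, and $c(P^*)$ can be of order $c(\opt)$, which may be arbitrarily larger than $c(\opt')$ — it is not an $O(\xi)c(\opt')$ error term. This is exactly why the paper lower-bounds the cost of $S'_\xi-H'_{i-1}+H_i+P^*$ by $c(\opt)$ rather than $c(\opt')$, defines $\Delta_{\opt}=c(\opt)-(1+\xi)c(\opt')\geq 0$ (under the WLOG assumption $c(\opt)>(1+\epsilon)c(\opt')$, which you omit), and then carries both $\Delta_{\opt}$ and $c(P^*)$ through the computation: they appear with sign $+$ in inequality~(\ref{eq:R_minus_second}) for $d(S_\xi)$ and with sign $-$ in inequality~(\ref{eq:R_minus_first}) for $d(H_i)$, so they cancel exactly in $c(S_i)\leq d(S_\xi)-d(H_i)+d(H'_i)$. (The inequality $(1+\xi)\Delta_{\opt}-(1+\xi)c(P^*)\leq \Delta_{\opt}-c(P^*)$ in that chain uses $c(P^*)\geq\Delta_{\opt}$ from Lemma~\ref{lm:R_minus}.) Your concluding remark that ``lower-bounding $d(H_i)$ directly via $c(\opt')$ rather than $c(\opt)$ sidesteps this cleanly'' is thus backwards: the $\Delta_{\opt}$ bookkeeping, together with the use of $P^*$ in the definition of $H_i$, is precisely the mechanism that makes the argument go through; $\rho=1$ is needed not to replace $c(\opt)$ by $c(\opt')$, but because Lemma~\ref{lm:R_minus} itself requires the starting solution to be an optimal solution of $I$.
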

\begin{proof}
Corollary~\ref{cor:BorchersDu} implies that the computation of each $S_{v,\xi}$, with $v \in V(P^*)$, requires polynomial time. By Lemma~\ref{lm:R_minus}, $S_{\xi}$ is a Steiner forest of $I_n$ with a constant number of trees. Therefore, using Lemma~\ref{lm:guessold} and the fact that $h$ is a constant value, the call of $\guessold(I',S_{\xi},h)$ outputs a solution in polynomial time. Hence, the overall time complexity of Algorithm~\ref{alg:R_minus} is polynomial. 

In the following we show that Algorithm~\ref{alg:R_minus} returns a $(1+\epsilon)$-approximate solution. W.l.o.g., we can assume that $c(\opt)>(1+\epsilon)c(\opt')$ as otherwise the solution returned by the algorithm, whose cost is less than or equal to $c(\opt)$, would already be a $(1+\epsilon)$-approximate solution.

Let $S'_{\xi}$ be the Steiner tree of $I'_n$ that is returned by the call $\borchersdu(I',\opt',\xi)$ and let $H'_0$ be the forest consisting of a minimal set of full components of $S'_{\xi}$ whose addition to $S_{\xi}$ yields a Steiner tree of $I'_n$. For every $i=1,\dots,\ell$, with $\ell= \lceil 2/\epsilon \rceil$, we define:
\begin{itemize}
\item $H_i$ as the forest consisting of a minimal set of full components of $S_{\xi}$ such that the addition of $H_i$ and $P^*$ to $S'_{\xi}-H'_{i-1}$ yields a Steiner tree of $I_n$;\footnote{We observe that the existence of $H_i$ is guarantee by the fact that $S_{\xi}+P^*$ is a Steiner tree of $I_n$.}
\item $H'_i$ as the forest consisting of a minimal set of full components of $S'_{\xi}$ whose addition to $S_{\xi}-H_{i}$ yields a Steiner tree of $I'_n$.
\end{itemize}

For the rest of the proof, we denote by $S_i$ the Steiner tree of $I'$ obtained by augmenting $S_{\xi}-H_{i}$ with $H'_i$.

Let $k=2^{2+\lceil 1/\xi \rceil}$ and $r=2^{\lceil 1/\xi \rceil}$. Observe that $S_{\xi}$ is a $k$-restricted Steiner forest of $I_n$ (see Corollary~\ref{cor:BorchersDu}), while $S'_{\xi}$ is an $r$-restricted Steiner tree of $I'_n$ (see Theorem~\ref{thm:BorchersDu}). Using Lemma~\ref{lm:full_connected_components}, $H'_0$ contains at most $(1+\lceil 1/\xi\rceil)$ full components and spans at most $(1+\lceil 1/\xi\rceil) r$ terminals. By repeatedly using Lemma~\ref{lm:full_connected_components}, $H_i$ contains at most $(1+\lceil 1/\xi\rceil) k^{i-1}r^{i}$ full components and spans at most $(1+\lceil 1/\xi\rceil) k^{i}r^{i}$ terminals, while $H'_i$ contains at most $(1+\lceil 1/\xi\rceil) k^{i}r^{i}$ full components and spans at most $(1+\lceil 1/\xi\rceil) k^{i}r^{i+1}$ terminals. 
This implies that the number of full components of $H_i$, with $i=1,\dots,\ell$, is at most
\begin{equation*}
(1+\lceil 1/\xi\rceil) k^{i-1}r^{i} \leq (1+\lceil 1/\xi\rceil) k^{\ell}r^{\ell}= (1+\lceil 1/\xi\rceil)2^{2(1+\lceil 1/\xi\rceil)\lceil 2/\epsilon\rceil}= h.
\end{equation*}
Therefore, by Lemma~\ref{lm:merging_opts}, the cost of the solution returned by the algorithm is at most the cost of $S_i$, for every $i=1,\dots,\ell$. As a consequence
\begin{equation}\label{eq:R_minus_sixth}
c(S') \leq \frac{1}{\ell}\sum_{i=1}^\ell c(S_i).
\end{equation}

Now we prove an upper bound to the cost of each $S_i$.
Let $\Delta_{\opt}=c(\opt)-(1+\xi)c(\opt')$ and observe that $\Delta_{\opt} \geq 0$ by our assumption.
Since the addition of $H_{i}$ and $P^*$ to $S'_{\xi}-H'_{i-1}$ yields a Steiner tree of $I$, the cost of this tree is lower bounded by the cost of $\opt$. As a consequence, using Theorem~\ref{thm:BorchersDu} in the second inequality that follows, 
\begin{equation*}
c(\opt)	\leq d(S'_{\xi})-d(H'_{i-1})+d(H_{i})+d(P^*)\leq (1+\xi)c(\opt') - d(H'_{i-1}) + d(H_{i})+c(P^*),
\end{equation*}
from which we derive
\begin{equation}\label{eq:R_minus_first}
d(H_{i}) \geq  d(H'_{i-1}) + \Delta_{\opt} - c(P^*). 
\end{equation}
Using Corollary~\ref{cor:BorchersDu} and Lemma~\ref{lm:R_minus} in the first  and last inequality that follows, respectively, we have that
\begin{align}\label{eq:R_minus_second}
d(S_{\xi})	& = \sum_{v \in V(P^*)}d(S_{v,\xi}) \leq (1+\xi)c(\opt-P^*) \leq (1+\xi)c(\opt)-(1+\xi)c(P^*)\notag\\
			& \leq (1+\xi)^2c(\opt')+(1+\xi)\Delta_\opt-(1+\xi)c(P^*)\notag\\
			& \leq (1+\xi)^2c(\opt')+\Delta_\opt-c(P^*).
\end{align}
Using both~(\ref{eq:R_minus_first}) and~(\ref{eq:R_minus_second}) in the second inequality that follows, we can upper bound the cost of $S_i$, for every $i \in \{1,\dots,\ell\}$, with
\begin{equation}\label{eq:R_minus_fourth}
c(S_i)	\leq  d(S_{\xi})-d(H_{i})+d(H'_i) \leq (1+\xi)^2c(\opt')-d(H'_{i-1})+d(H'_{i}).
\end{equation}

Therefore, if for every $i\in\{1,\dots,\ell\}$ we substitute the term $c(S_i)$ in~(\ref{eq:R_minus_sixth}) with the corresponding upper bound  in~(\ref{eq:R_minus_fourth}), and use Theorem~\ref{thm:BorchersDu} to derive $d(H'_\ell) \leq d(S'_{\xi}) \leq (1+\xi)c(\opt')$, we obtain
\begin{equation*}
c(S')\leq \frac{1}{\ell}\sum_{i=1}^\ell c(S_i) \leq (1+\xi)^2 c(\opt')+\frac{1+\xi}{\ell} c(\opt')\leq (1+\epsilon)c(\opt'),
\end{equation*}
where last inequality holds by the choice of $\xi$ and $\ell$, and because $\epsilon \leq 1$. This completes the proof.
\end{proof}

\section{The cost of an edge decreases}\label{E_minus}

In this section we consider the local modification in which the cost of an edge $e=(u,v) \in E$ decreases by $0<\Delta \leq c(e)$. More precisely, the input instance satisfies the following: $G'=G$, $R'=R$, whereas for every $e' \in E$,
\[
c'(e')=
	\begin{cases}
		c(e) - \Delta	&	\text{if $e'=e$;}\\
		c(e')			&	\text{otherwise.}
	\end{cases}
\]

For the sake of readability, in this section we drop the superscripts from $G'$ and $R'$. Furthermore, for the rest of this section, we denote by $c_{-e}$ the edge-cost function $c$ restricted to $G-e$ and by $d_{-e}$ the corresponding shortest-path metric. We recall that, unless $\rho=1$, the problem is as hard to approximate as the $\stp$~\cite{DBLP:conf/fsttcs/GoyalM15}. Therefore, in this section we assume that $S=\opt$ is an optimal solution, i.e., $\rho=1$.

Let $\xi = \epsilon/10$ and $h = 2(1+\lceil 1/\xi\rceil)2^{2(2+\lceil 1/\xi\rceil)\lceil 2/\epsilon\rceil}$.
As $c'(\opt')$ may be much smaller than $c'(\opt)$, we first need to remove a path from $\opt$ to ensure that the obtained Steiner forest of $I'$ has a constant number of trees and a cost upper bounded by $(1+\xi) c'(\opt')$.  The algorithm, whose pseudocode is reported in Algorithm~\ref{alg:E_minus}, precomputes a set ${\cal H}$ containing all the paths of $\opt$ with up to $2(1+\lceil 1/\xi \rceil)$ Steiner vertices whose corresponding degrees in $\opt$ are all greater than or equal to 3. As the following lemma shows, there exists $P^* \in {\cal H}$, which is exactly the path we are searching for when $c'(\opt')$ is much smaller than $c'(\opt)$.

\begin{algorithm}[t]
\footnotesize
\caption{The cost of an edge $e=(u,v)$ of $G$ decreases by $0 < \Delta \leq c(e)$.}
	\label{alg:E_minus}

	\DontPrintSemicolon
	$\xi \gets \epsilon/10$;\;
	$h \gets 2(1+\lceil 1/\xi\rceil)2^{2(2+\lceil 1/\xi\rceil)\lceil 2/\epsilon\rceil}$;\;
	$S' \gets \opt$;\;
	{let ${\cal H}$ be the set of all the paths of $\opt$, each of which contains up to $2(1+\lceil 1/\xi \rceil)$ Steiner vertices whose corresponding degrees in $\opt$ are all greater than or equal to 3};\;
	
	\For{every $P \in {\cal H}$}
	{	
		let $S_{\xi}$ be the empty graph;\;
		\For{every $v \in V(P)\cap V(\opt - P)$}
		{	
			let $S_v$ be the tree of $F$ that spans $v$;\;
			$S_{v,\xi} \gets \borchersdu(\langle G-e,c_{-e},R\rangle,S_v,\xi,v)$;\;
			$S_{\xi} \gets S_{\xi} + S_{v,\xi}$;\;
		}
		$S'' \gets \guessold(I',S_{\xi},h)$;\;
		{\bf if} $c'(S'') \leq c'(S')$ {\bf then} $S'\gets S''$;\;
	}
	\Return $S'$;

\end{algorithm}

\begin{lemma}\label{lm:E_minus}
If $e \in E(\opt')$ and $e \not \in E(\opt)$, then there exists $P^* \in {\cal H}$ s.t. $c'(P^*) \geq c(\opt)-(1+\xi)c'(\opt')$.
\end{lemma}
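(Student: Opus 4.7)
The plan is to pick representative terminals $r_u\in T_u'\cap R$ and $r_v\in T_v'\cap R$ on the two sides of $\opt'-e$ and analyse the unique path $P$ of $\opt$ between them, case-splitting on how many branching Steiner vertices of $\opt$ lie on $P$. Both components $T_u', T_v'$ of $\opt'-e$ exist because $e\in\opt'$, and each contains a terminal since leaves of $\opt'$ are terminals. Observe that $e\notin P$ (because $e\notin\opt$) so $c'(P)=c(P)$.

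\textbf{Case A.} If $P$ has at most $2(1+\lceil 1/\xi\rceil)$ branching Steiner vertices of $\opt$, set $P^*:=P\in\mathcal{H}$. Since $P^*$ bridges $T_u'$ and $T_v'$, the subgraph $(\opt'-e)+P^*$ is a connected spanning subgraph of $R$, hence contains a Steiner tree of $I$, yielding
\begin{equation*}
c(\opt)\leq c(\opt'-e)+c(P^*)=c'(\opt')-c'(e)+c'(P^*)\leq c'(\opt')+c'(P^*).
\end{equation*}
Therefore $c'(P^*)\geq c(\opt)-c'(\opt')\geq c(\opt)-(1+\xi)c'(\opt')$.

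\textbf{Case B.} Otherwise, let $P_u^*$ (resp.\ $P_v^*$) be the maximal hop-length subpath of $P$ starting at $r_u$ (resp.\ $r_v$) with at most $2(1+\lceil 1/\xi\rceil)$ internal branching Steiner vertices; by maximality each contains exactly $2(1+\lceil 1/\xi\rceil)$, and both lie in $\mathcal{H}$. Following Case 2 of Lemma~\ref{lm:R_minus}, $\opt-P_u^*$ contains at least $2(1+\lceil 1/\xi\rceil)+1$ edge-disjoint paths from $V(P_u^*)$ to terminals of $R$: one per branch of each of $P_u^*$'s internal branching Steiner vertices, plus the extension of $P$ past $P_u^*$, which terminates at $r_v\in T_v'$. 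An analogous statement holds for $P_v^*$, with extension reaching $r_u\in T_u'$. A counting argument, crucially exploiting the factor $2$ in the definition of $\mathcal{H}$, shows that at least one of $P^*\in\{P_u^*,P_v^*\}$ admits $1+\lceil 1/\xi\rceil$ edge-disjoint such paths terminating on the opposite component of $\opt'-e$; let $Q$ be a minimum-cost such bridging path. Then
\begin{equation*}
(1+\lceil 1/\xi\rceil)\,c(Q)\leq c(\opt)-c(P^*) \qquad\text{and}\qquad c(\opt)\leq c'(\opt')+c(P^*)+c(Q),
\end{equation*}
the latter since $(\opt'-e)+P^*+Q$ is a connected spanning subgraph of $R$. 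Combining these algebraically, exactly as in Lemma~\ref{lm:R_minus}, yields $c(Q)\leq\xi c'(\opt')$ and therefore $c'(P^*)=c(P^*)\geq c(\opt)-(1+\xi)c'(\opt')$.

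\textbf{Main obstacle.} The delicate step is the counting/pigeonhole argument in Case B: it must rule out pathological configurations where every branch off $P_u^*$ ends at a $T_u'$-terminal and every branch off $P_v^*$ ends at a $T_v'$-terminal. The factor of $2$ in $2(1+\lceil 1/\xi\rceil)$ is engineered to supply exactly the slack needed here, but handling these borderline cases will require a careful structural analysis of the middle portion of $P$, together with a judicious choice of which extension or branch plays the role of $Q$.
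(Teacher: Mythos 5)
Your proposal identifies the correct structure (two cases, with the hard case requiring a careful choice of subpath plus a counting argument), but the crucial Case~B step is not carried out, and as stated it does not work. You take the two \emph{extreme} subpaths $P_u^*, P_v^*$ of $P$ and hope one admits enough branches to the opposite component. The pathological configuration you flag is genuinely possible: if, say, the first $\sigma/2$ branching Steiner vertices along $P$ all have their branches reaching only $T_u'$-terminals and the last $\sigma/2$ all reach only $T_v'$-terminals, then $P_u^*$ has only the single extension of $P$ past it as a bridge to $T_v'$, and symmetrically for $P_v^*$ -- neither gives $1+\lceil 1/\xi\rceil$ edge-disjoint bridging paths, so the inequality $(1+\lceil 1/\xi\rceil)c(Q)\le c(\opt)-c(P^*)$ does not follow. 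The factor of $2$ alone does not "supply the slack"; what is needed is to slide a window of $2(1+\lceil 1/\xi\rceil)$ consecutive branching vertices along the whole of $P$ and invoke a discrete intermediate-value argument: label each branching vertex $x_i$ by the side ($u$ or $v$) reached via some branch off $x_i$, anchor the endpoints with labels $u$ and $v$, and observe that as the window slides by one position the count of $u$-labels changes by at most one, so either an end window already has $\geq 1+\lceil 1/\xi\rceil$ labels of the opposite type (in which case the endpoint terminal of $P$ sits on $P^*$ and one of $P_u,P_v$ is trivial), or some interior window has exactly $1+\lceil 1/\xi\rceil$ of each. This is the argument the paper uses and it is what resolves the obstacle you name rather than defer.

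Two smaller points. First, the paper chooses $\bar P$ to be a terminal-to-terminal path in $\opt$ with \emph{no internal terminals}; your $P$, being the full $r_u$-$r_v$ path, may contain internal terminals, which muddies both the labeling and the claim that all branches off branching vertices terminate at terminals outside $P$. You should pass to the minimal (hop-length) terminal-to-terminal subpath of $P$ crossing from $R_u$ to $R_v$. Second, membership in ${\cal H}$ is determined by the total count of branching Steiner vertices on the path, not the count of \emph{internal} ones, so your definition of $P_u^*$ as having "$2(1+\lceil 1/\xi\rceil)$ internal branching Steiner vertices" may overshoot ${\cal H}$'s bound by up to two if its endpoints also branch.
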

\begin{proof}
Let $S'_u$ and $S'_v$ denote the trees of $\opt'-e$ containing $u$ and $v$, respectively. Let $R_u=V(S'_u)\cap R$ and $R_v=V(S'_v)\cap R$. As $e \not \in E(\opt)$, there exists a path in $\opt$, say $\bar P$, that connects a terminal in $R_u$, say $u'$, with a terminal in $R_v$, say $v'$, and such that all the internal vertices of $\bar P$ are Steiner vertices. Let $\sigma$ be the number of Steiner vertices of $\bar P$ whose corresponding degrees in $\opt$ are all greater than or equal to 3.
We divide the proof into two complementary cases depending on the value of $\sigma$. In both cases, we prove that $P^*$ is a subpath of $\bar P$.

We consider the case in which $\sigma \leq 2(1+\lceil 1/\xi\rceil)$. Let $P^*=\bar P$. Clearly $P^* \in {\cal H}$ and $\opt'-e+P^*$ is a Steiner tree of $I$. Therefore, $c(\opt)\leq c'(\opt'-e)+c'(P^*)\leq c'(\opt')+c'(P^*)$, from which we derive $c'(P^*) \geq c(\opt)-c'(\opt')\geq c(\opt)-(1+\xi)c'(\opt')$.

Now, we consider the case in which $\sigma >  2(1+\lceil 1/\xi\rceil)$. Let $X$ be the set of the $\sigma$ Steiner vertices of $\bar P$ whose corresponding degrees in $\opt$ are all greater than or equal to 3. Let $x_1,\dots,x_{\sigma}$ be the vertices in $X$ in order in which they are encountered while traversing $\bar P$ from $u'$ to $v'$. Finally, let $x_0=u'$ and $x_{\sigma+1}=v'$. We associate a label $\lambda(i)\in\{u,v\}$ with each $x_i$ to indicate the existence of a path in $\opt$ between $x_i$ and a terminal in $R_{\lambda(i)}$ that is edge-disjoint from $\bar P$ (if $\lambda(i)$ can assume both values, we arbitrarily choose one of them). We set $\lambda(0)=u$ and $\lambda(\sigma+1)=v$ to indicate that $u'$ (resp., $v'$) has the trivial paths towards itself. For $i=0,\dots, \sigma-2\lceil 1/\xi\rceil$, let $L_i=\lambda(i),\dots,\lambda(i+1+2\lceil 1/\xi\rceil))$ be the sequence of $2(1+\lceil 1/x\rceil)$ consecutive labels starting from $i$. One of the following holds:
\begin{description}
\item[(a)] $L_0$ contains at least $1+\lceil 1/\xi\rceil$ occurrences of $v$;
\item[(b)] $L_{\sigma-2\lceil 1/\xi\rceil}$ contains at least $1+\lceil 1/\xi\rceil$ occurrences of $v$;
\item[(c)] there is an index $i$ such that the sequence $L_i$ contains exactly $1+\lceil 1/\xi\rceil$ occurrences of $u$ and $1+\lceil 1/\xi\rceil$ occurrences of $v$.
\end{description}
Indeed, let $\mu_i$ and $\nu_i$ denote the occurrences of $u$ and $v$ in $L_i$, respectively. Observe that $|\mu_{i}-\mu_{i+1}|\in\{0,1\}$ as well as $|\nu_{i}-\nu_{i+1}|\in\{0,1\}$. More precisely:
\begin{itemize}
\item $\mu_i=\mu_{i+1}$ iff $\nu_i=\nu_{i+i}$;
\item $\mu_{i}=\mu_{i+1}+1$ iff $\nu_{i}=\nu_{i+1}-1$;
\item $\mu_{i}=\mu_{i+1}-1$ iff $\nu_{i}=\nu_{i+1}+1$.
\end{itemize}
Therefore, if neither (a) nor (b) holds, i.e., $\mu_0 > \nu_0$ and $\mu_{\sigma-2\lceil 1/\xi\rceil} < \nu_{\sigma-2\lceil 1/\xi\rceil}$, then (c) must hold. 

Let $P^*$ be the path containing exactly the Steiner vertices corresponding to a sequence $L_i$ that satisfies any of the above three conditions (ties are broken arbitrarily). Observe that $P^* \in {\cal H}$. Let $P_u$ (resp., $P_v$) be a minimum-cost path between a vertex of $P^*$ and a vertex of $R_u$ (resp., $R_v$) that is edge-disjoint from $P^*$. W.l.o.g., if $P^*$ contains $u'$, then we can assume that $P_u$ is the path containing only vertex $u$; similarly, if $P^*$ contains $v'$, then we can assume that $P_v$ is the path containing only vertex $v$. As the addition of both $P^*$, $P_u$ and $P_v$ to $\opt'-e$ yields a Steiner tree of $I$, $(1+\lceil 1/\xi\rceil )(c'(P_u)+c'(P_v))+c'(P^*)=(1+\lceil 1/\xi\rceil )(c(P_u)+c(P_v))+c(P^*)\leq c(\opt)\leq c'(\opt')+c'(P_u)+c'(P_v)+c'(P^*)$, from which we derive $c'(P_u)+c'(P_v)\leq \xi c'(\opt')$. As a consequence $c(\opt)\leq c'(\opt')+c'(P_u)+c'(P_v)+c'(P^*) \leq c'(\opt')+\xi c'(\opt')+c'(P^*)$, i.e., $c'(P^*) \geq c(\opt)-(1+\xi)c(\opt')$.
\end{proof}

The algorithm always returns a solution whose cost is upper bounded by $c'(\opt)$. This allows us to guarantee that the output of the algorithm is a $(1+\epsilon)$-approximate solution when $e \not \in E(\opt')$ or $e \in E(\opt)$. In the complementary case in which $e \in E(\opt')$ and $e \not \in E(\opt)$ the algorithm computes a $(1+\epsilon)$-approximate solution using the path $P^*$ that satisfies the conditions stated in Lemma~\ref{lm:E_minus}. More precisely, the algorithm computes a $k$-restricted Steiner forest $S_{\xi}$ of $\opt-P^*$ w.r.t. the $\stp$ instance $\langle G-e,c_{-e},R\rangle$, with $k=2^{2+\lceil 1/\xi \rceil}$, such that $S_{\xi}+P^*$ yields a Steiner tree of $I$. Finally, the algorithm calls $\guessold(I',S_{\xi},h)$ to compute a feasible solution $S''$ and returns a solution whose cost is upper bounded by $c'(S'')$.

\begin{theorem}
Let $\langle I, \opt, I' \rangle$ be an instance of Steiner tree reoptimization, where $\opt$ is an optimal solution of $I$ and $I'$ is obtained from $I$ by decreasing the cost of a single edge. Then Algorithm~\ref{alg:E_minus} computes a $(1+\epsilon)$-approximate Steiner tree of $I'$ in polynomial time.
\end{theorem}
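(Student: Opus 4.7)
I plan to adapt the proof of Theorem~\ref{thm:R_minus}, with the path supplied by Lemma~\ref{lm:E_minus} playing the role that $P^*$ did there, and to carry out the entire analysis inside the shortest-path metric $d_{-e}$ of $G-e$ so as to sidestep the fact that the modified edge $e$ may belong to $\opt'$. Polynomial running time will be immediate: $|\mathcal{H}|=O(n^2)$, each Borchers–Du call is polynomial by Corollary~\ref{cor:BorchersDu}, and $\guessold(I',S_{\xi},h)$ is polynomial by Lemma~\ref{lm:guessold} since $h$ is a constant and $S_\xi$ always has a constant number of trees (bounded by the branching structure of any $P\in\mathcal{H}$ inside $\opt$). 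For the approximation ratio I will first dispose of the cases $e\notin E(\opt')$ or $e\in E(\opt)$: in each of these $\opt$ remains competitive under $c'$, so $c'(\opt)\le c'(\opt')$ and the tree returned by the algorithm, whose cost never exceeds $c'(\opt)$, is already optimal. I therefore restrict to $e\in E(\opt')\setminus E(\opt)$, and without loss of generality to $c'(\opt)=c(\opt)>(1+\epsilon)c'(\opt')$ (else returning $\opt$ suffices); setting $\Delta_{\opt}:=c(\opt)-(1+\xi)c'(\opt')>0$, Lemma~\ref{lm:E_minus} supplies a path $P^*\in\mathcal{H}$ with $c'(P^*)\ge\Delta_{\opt}$.

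For the iteration of Algorithm~\ref{alg:E_minus} that processes this $P^*$, the forest $S_{\xi}$ built in the inner loop satisfies $d_{-e}(S_{\xi})\le(1+\xi)(c(\opt)-c'(P^*))$ because $e\notin E(\opt)$. For the analysis only, I will introduce an auxiliary $S'_{\xi}:=\opt'_{u,\xi}+\opt'_{v,\xi}$ obtained by splitting $\opt'-e$ at $e$ and applying Corollary~\ref{cor:BorchersDu} to each piece inside the instance $\langle G-e,c_{-e},\cdot\rangle$ (preserving $u$ and $v$), yielding a $k$-restricted Steiner forest with $d_{-e}(S'_{\xi})\le(1+\xi)(c'(\opt')-c'(e))\le(1+\xi)c'(\opt')$. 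Then, in complete analogy with Theorem~\ref{thm:R_minus}, for $i=1,\dots,\ell:=\lceil 2/\epsilon\rceil$ I will inductively define $H'_0$, $H_i$, $H'_i$ as minimal sets of full components of the appropriate forest that make $S_{\xi}+H'_0+e$, $(S'_{\xi}-H'_{i-1})+H_i+P^*$, and $(S_{\xi}-H_i)+H'_i+e$ Steiner trees of $I'_n$, $\langle G-e,c_{-e},R\rangle_n$, and $I'_n$, respectively. Iterated application of Lemma~\ref{lm:full_connected_components} will bound the number of full components of each $H_i$ by $h$ for $i\le\ell$, so Lemma~\ref{lm:merging_opts} gives $c'(S'')\le d_{-e}(S_{\xi})-d_{-e}(H_i)+d_{-e}(H'_i)+c'(e)$, where the trailing $c'(e)$ accounts for the $d'$-weight (at most $c'(e)$) of the edge $e$ included in the augmenting subgraph.

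The crux is the cancellation of $\Delta_{\opt}$, $c'(P^*)$, and $c'(e)$. Since $e\notin E(\opt)$ makes $\opt$ optimal also inside $\langle G-e,c_{-e},R\rangle$, every Steiner tree of $\langle G-e,c_{-e},R\rangle_n$ has $d_{-e}$-cost at least $c(\opt)$; applied to $(S'_{\xi}-H'_{i-1})+H_i+P^*$ this yields $d_{-e}(H_i)\ge\Delta_{\opt}+(1+\xi)c'(e)+d_{-e}(H'_{i-1})-c'(P^*)$. Substituting $c(\opt)=(1+\xi)c'(\opt')+\Delta_{\opt}$ into the Borchers–Du bound on $S_{\xi}$ and using $\Delta_{\opt}\le c'(P^*)$ to drop a factor of $(1+\xi)$ from a non-positive quantity, I obtain $d_{-e}(S_{\xi})\le(1+\xi)^2c'(\opt')+\Delta_{\opt}-c'(P^*)$. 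Plugging both into the inequality for $c'(S'')$ cancels $\Delta_{\opt}$ and $c'(P^*)$ entirely and reduces the $c'(e)$ contribution to $-\xi c'(e)\le 0$, leaving $c'(S'')\le(1+\xi)^2c'(\opt')-d_{-e}(H'_{i-1})+d_{-e}(H'_i)$; averaging over $i=1,\dots,\ell$ telescopes to at most $(1+\xi)^2c'(\opt')+(1+\xi)c'(\opt')/\ell\le(1+\epsilon)c'(\opt')$ by the choices $\xi=\epsilon/10$, $\ell=\lceil 2/\epsilon\rceil$, and $\epsilon\le 1$. The main difficulty is precisely this three-metric interplay: because $e\in E(\opt')$ I cannot apply Borchers–Du to $\opt'$ inside $G-e$ as Section~\ref{E_plus} did, yet a Borchers–Du inside $I'$ alone would only produce $d'$-bounds and forbid the use of $c(\opt)$ as a lower bound; splitting $\opt'-e$ into two pieces and running Borchers–Du on each inside $\langle G-e,c_{-e},\cdot\rangle$ is the compromise that keeps everything in the single metric $d_{-e}$ and lets the residual $c'(e)$ terms cancel.
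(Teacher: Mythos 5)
Your proposal is correct and follows essentially the same route as the paper's own proof: reduce to the case $e\in E(\opt')\setminus E(\opt)$ by noting $\opt$ stays optimal otherwise; take $P^*$ from Lemma~\ref{lm:E_minus}; split $\opt'-e$ into two trees at $e$ and run Borchers--Du on each inside $\langle G-e,c_{-e},\cdot\rangle$ to get $S'_\xi$ so that everything lives in the single metric $d_{-e}$; build the alternating minimal swap sequence $H'_0, H_1, H'_1, \dots$; invoke Lemmas~\ref{lm:full_connected_components}, \ref{lm:merging_opts}, and~\ref{lm:guessold}; and close with the telescoping average. The only cosmetic difference is that you retain the factor $(1+\xi)c'(e)$ in the lower bound on $d_{-e}(H_i)$ whereas the paper relaxes it to $c'(e)$ before combining, but both cancel identically against the $+c'(e)$ in the cost of $S_i$.
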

\begin{proof}
Corollary~\ref{cor:BorchersDu} implies that the computation of each $S_{v,\xi}$, with $v \in V(P)$, requires polynomial time. Observe that $\opt-P$ is a Steiner forest of $I$ of at most $2(1+\lceil 1/\xi \rceil)$ trees. As a consequence $S_{\xi}$ is a Steiner forest of $I_n$ with at most $2(1+\lceil 1/\xi \rceil)$ trees. Therefore, using Lemma~\ref{lm:guessold} and the fact that $h$ is a constant value, the call of $\guessold(I',S_{\xi},h)$ outputs a solution in polynomial time. Finally, since ${\cal H}$ contains at most $O(n^2)$ paths (one path for each pair of vertices), the overall time complexity of Algorithm~\ref{alg:E_minus} is polynomial. 

In the following we show that Algorithm~\ref{alg:E_minus} returns a $(1+\epsilon)$-approximate solution. W.l.o.g., we can assume that $c(\opt)>(1+\epsilon)c(\opt')$, as otherwise, the solution returned by the algorithm, whose cost is always less then or equal to $c'(\opt)\leq c(\opt)$, would already be a $(1+\epsilon)$-approximate solution. As a consequence, we have that $e \not \in E(\opt)$  as well as $e \in E(\opt')$: indeed, if even one of these two conditions were not satisfied, then $\opt$ would be an optimal solution of $I'$.

Let $S'_u$ and $S'_v$ be the two trees of $\opt'-e$ that contain $u$ and $v$, respectively. Let $S'_{u,\xi}$ and $S'_{v,\xi}$ be the two $k$-restricted trees of $I'_n$, with $k=2^{2+1/\xi}$, that are  returned by the call $\borchersdu(\langle G-e,c_{-e},R\rangle,S'_u,\xi,u)$ and $\borchersdu(\langle G-e,c_{-e},R\rangle,S'_v,\xi,v)$, respectively. Finally, let $S'_{\xi}=S'_{u,\xi}+S'_{v,\xi}$, $P^* \in {\cal H}$ the path satisfying the conditions stated in Lemma~\ref{lm:E_minus}, and $S_{\xi}$ the $k$-restricted Steiner forest of $I$ that is computed by the algorithm when $P=P^*$. Let $H'_0$ be the forest consisting of a minimal set of full components of  $S'_{\xi}$ such that the addition of $H'_0$ and $e$ to $S_{\xi}+e$ yields a Steiner tree of $I'_n$.
For every $i=1,\dots,\ell$, with $\ell= \lceil 2/\epsilon \rceil$, we define:
\begin{itemize}
\item $H_i$ as the forest consisting of a minimal set of full components of $S_{\xi}$ such that the addition of $H_i$ and $P^*$ to $S'_{\xi}-H'_{i-1}$ yields a Steiner tree of $I_n$;\footnote{We observe that the existence of $H_i$ is guarantee by the fact that $S_{\xi}+P^*$ is a Steiner tree of $I_n$.}
\item $H'_i$ as the forest consisting of a minimal set of full components of $S'_{\xi}$ such that the addition of $H'_i$ and $e$ to $S_{\xi}-H_{i}$ yields a Steiner tree of $I'_n$.\footnote{We observe that the existence of $H_i$ is guarantee by the fact that $S'_{\xi}+e$ is a Steiner tree of $I'_n$.}

\end{itemize}

For the rest of the proof, we denote by $S_i$ the Steiner tree of $I'$ obtained by augmenting $S_{\xi}-H_{i}$ with $H'_i$ and $e$.

Let $k=2^{2+\lceil 1/\xi \rceil}$. Observe that $S_{\xi}$ and $S'_{\xi}$ are $k$-restricted Steiner forests of $I_n$ and $I'_n$, respectively (see Corollary~\ref{cor:BorchersDu}). Observe that $S_{\xi}+e$ is a forest of at most $1+2(1+\lceil 1/\xi\rceil)$ trees. Therefore, using Lemma~\ref{lm:full_connected_components}, $H'_0$ contains at most $2(1+\lceil 1/\xi\rceil)$ full components and spans at most $2(1+\lceil 1/\xi\rceil)k$ terminals. Therefore, by repeatedly using Lemma~\ref{lm:full_connected_components}, $H_i$ contains at most $2(1+\lceil 1/\xi\rceil) k^{2i-1}$ full components and spans at most $2(1+\lceil 1/\xi\rceil) k^{2i}$ terminals, while $H'_i$ contains at most $2(1+\lceil 1/\xi\rceil) k^{2i)}$ full components and spans at most $2(1+\lceil 1/\xi\rceil) k^{2i+1}$ terminals. 
This implies that the number of full components of $H_i$, with $i=1,\dots,\ell$, is at most
\begin{equation*}
2(1+\lceil 1/\xi\rceil) k^{2i-1} \leq 2(1+\lceil 1/\xi\rceil) k^{2\ell}= 2(1+\lceil 1/\xi\rceil)2^{2(2+\lceil 1/\xi\rceil)\lceil 2/\epsilon\rceil}= h.
\end{equation*}
Therefore, by Lemma~\ref{lm:merging_opts}, the cost of the solution returned by the algorithm cannot be worse than the cost of $S_i$, for every $i=1,\dots,\ell$. As a consequence
\begin{equation}\label{eq:E_minus_sixth}
c'(S') \leq \frac{1}{\ell}\sum_{i=1}^\ell c'(S_i).
\end{equation}

Now we prove an upper bound to the cost of each $S_i$.
Let $\Delta_{\opt}=c(\opt)-(1+\xi)c'(\opt')$ and observe that $\Delta_{\opt} \geq  0$ by our assumption.
Since the addition of $H_{i}$ and $P^*$ to $S'_{\xi}-H'_{i-1}$ yields a Steiner tree of $I$, the cost of this tree is lower bounded by the cost of $\opt$. As a consequence, using Corollary~\ref{cor:BorchersDu} in the third inequality that follows, 
\begin{align*}
c(\opt)	& \leq d_{-e}(S'_{\xi})-d_{-e}(H'_{i-1})+d_{-e}(H_{i})+d_{-e}(P^*) \\
		& \leq (1+\xi)c_{-e}(\opt'-e) - d_{-e}(H'_{i-1}) + d_{-e}(H_{i})+d_{-e}(P^*) \\
		& \leq (1+\xi)c'(\opt')-c'(e) - d_{-e}(H'_{i-1}) + d_{-e}(H_{i})+d_{-e}(P^*) \\
		& \leq (1+\xi)c'(\opt') - d_{-e}(H'_{i-1}) + d_{-e}(H_{i})+c'(P^*)-c'(e),
\end{align*}
from which we derive
\begin{equation}\label{eq:E_minus_first}
d_{-e}(H_{i}) \geq  d_{-e}(H'_{i-1}) + \Delta_{\opt} - c'(P^*)+c'(e). 
\end{equation}
Using Corollary~\ref{cor:BorchersDu} and  Lemma~\ref{lm:E_minus} respectively in the first and last inequality that follows we have that
\begin{align}\label{eq:E_minus_second}
d_{-e}(S_{\xi})	& = \sum_{v \in V(P^*)}d_{-e}(S_{v,\xi}) \leq (1+\xi)c_{-e}(\opt-P^*)\notag \\
				& \leq (1+\xi)c(\opt)-(1+\xi)c(P^*)\notag\\
				& \leq (1+\xi)^2c'(\opt')+(1+\xi)\Delta_\opt-(1+\xi)c'(P^*)\notag \\
				& \leq (1+\xi)^2c'(\opt')+\Delta_\opt-c'(P^*).
\end{align}
Using both~(\ref{eq:E_minus_first}) and~(\ref{eq:E_minus_second}) in the second inequality that follows, we can upper bound the cost of $S_i$, for every $i \in \{1,\dots,\ell\}$, with
\begin{align}\label{eq:E_minus_fourth}
c'(S_i)	& \leq  d_{-e}(S_{\xi})-d_{-e}(H_{i})+d_{-e}(H'_i)+c'(e)\notag \\ 
		& \leq (1+\xi)^2c'(\opt')-d_{-e}(H'_{i-1})+d_{-e}(H'_{i}).
\end{align}

Therefore, if for every $i\in\{1,\dots,\ell\}$ we substitute the term $c(S_i)$ in~(\ref{eq:E_minus_sixth}) with the corresponding upper bound in~(\ref{eq:E_minus_fourth}), and use Corollary~\ref{cor:BorchersDu} to derive $d_{-e}(H'_\ell) +c'(e)\leq (1+\xi)c_{-e}(\opt'-e)+c'(e)\leq (1+\xi)c'(\opt')$, we obtain
\begin{equation*}
c'(S')\leq \frac{1}{\ell}\sum_{i=1}^\ell c'(S_i) \leq (1+\xi)^2 c'(\opt')+\frac{1+\xi}{\ell} c'(\opt')\leq (1+\epsilon)c'(\opt'),
\end{equation*}
where last inequality holds by the choice of $\xi$ and $\ell$, and because $\epsilon \leq 1$. This completes the proof.
\end{proof}



\bibliography{bibliography}

\begin{thebibliography}{10}

\bibitem{DBLP:journals/networks/ArchettiBS03}
Claudia Archetti, Luca Bertazzi, and Maria~Grazia Speranza.
\newblock Reoptimizing the traveling salesman problem.
\newblock {\em Networks}, 42(3):154--159, 2003.
\newblock URL: \url{https://doi.org/10.1002/net.10091}.

\bibitem{DBLP:journals/dam/ArchettiBS10}
Claudia Archetti, Luca Bertazzi, and Maria~Grazia Speranza.
\newblock Reoptimizing the 0-1 knapsack problem.
\newblock {\em Discrete Applied Mathematics}, 158(17):1879--1887, 2010.
\newblock URL: \url{https://doi.org/10.1016/j.dam.2010.08.003}.

\bibitem{DBLP:journals/cor/ArchettiGS13}
Claudia Archetti, Gianfranco Guastaroba, and Maria~Grazia Speranza.
\newblock Reoptimizing the rural postman problem.
\newblock {\em Computers {\&} {OR}}, 40(5):1306--1313, 2013.
\newblock URL: \url{https://doi.org/10.1016/j.cor.2012.12.010}.

\bibitem{doi:10.1142/9781848162778_0004}
Giorgio Ausiello, Vincenzo Bonifaci, and Bruno Escoffier.
\newblock {\em Complexity and Approximation in Reoptimization}, pages 101--129.

\bibitem{DBLP:journals/jda/AusielloEMP09}
Giorgio Ausiello, Bruno Escoffier, J{\'{e}}r{\^{o}}me Monnot, and Vangelis~Th.
  Paschos.
\newblock Reoptimization of minimum and maximum traveling salesman's tours.
\newblock {\em J. Discrete Algorithms}, 7(4):453--463, 2009.
\newblock URL: \url{https://doi.org/10.1016/j.jda.2008.12.001}.

\bibitem{DBLP:journals/algorithmica/BenderFFFG15}
Michael~A. Bender, Martin Farach{-}Colton, S{\'{a}}ndor~P. Fekete, Jeremy~T.
  Fineman, and Seth Gilbert.
\newblock Reallocation problems in scheduling.
\newblock {\em Algorithmica}, 73(2):389--409, 2015.
\newblock URL: \url{https://doi.org/10.1007/s00453-014-9930-4}.

\bibitem{DBLP:conf/lata/BergH09}
Tobias Berg and Harald Hempel.
\newblock Reoptimization of traveling salesperson problems: Changing single
  edge-weights.
\newblock In Adrian{-}Horia Dediu, Armand{-}Mihai Ionescu, and Carlos
  Mart{\'{\i}}n{-}Vide, editors, {\em Language and Automata Theory and
  Applications, Third International Conference, {LATA} 2009, Tarragona, Spain,
  April 2-8, 2009. Proceedings}, volume 5457 of {\em Lecture Notes in Computer
  Science}, pages 141--151. Springer, 2009.
\newblock URL: \url{https://doi.org/10.1007/978-3-642-00982-2_12}.

\bibitem{DBLP:conf/swat/BiloBHKMWZ08}
Davide Bil{\`{o}}, Hans{-}Joachim B{\"{o}}ckenhauer, Juraj Hromkovic, Richard
  Kr{\'{a}}lovic, Tobias M{\"{o}}mke, Peter Widmayer, and Anna Zych.
\newblock Reoptimization of {S}teiner trees.
\newblock In Joachim Gudmundsson, editor, {\em Algorithm Theory - {SWAT} 2008,
  11th Scandinavian Workshop on Algorithm Theory, Gothenburg, Sweden, July 2-4,
  2008, Proceedings}, volume 5124 of {\em Lecture Notes in Computer Science},
  pages 258--269. Springer, 2008.
\newblock URL: \url{https://doi.org/10.1007/978-3-540-69903-3_24}.

\bibitem{DBLP:journals/algorithmica/BiloBKKMSZ11}
Davide Bil{\`{o}}, Hans{-}Joachim B{\"{o}}ckenhauer, Dennis Komm, Richard
  Kr{\'{a}}lovic, Tobias M{\"{o}}mke, Sebastian Seibert, and Anna Zych.
\newblock Reoptimization of the shortest common superstring problem.
\newblock {\em Algorithmica}, 61(2):227--251, 2011.
\newblock URL: \url{https://doi.org/10.1007/s00453-010-9419-8}.

\bibitem{DBLP:conf/waoa/BiloWZ08}
Davide Bil{\`{o}}, Peter Widmayer, and Anna Zych.
\newblock Reoptimization of weighted graph and covering problems.
\newblock In Evripidis Bampis and Martin Skutella, editors, {\em Approximation
  and Online Algorithms, 6th International Workshop, {WAOA} 2008, Karlsruhe,
  Germany, September 18-19, 2008. Revised Papers}, volume 5426 of {\em Lecture
  Notes in Computer Science}, pages 201--213. Springer, 2008.
\newblock URL: \url{https://doi.org/10.1007/978-3-540-93980-1_16}.

\bibitem{DBLP:conf/mfcs/BiloZ12}
Davide Bil{\`{o}} and Anna Zych.
\newblock New advances in reoptimizing the minimum {S}teiner tree problem.
\newblock In Branislav Rovan, Vladimiro Sassone, and Peter Widmayer, editors,
  {\em Mathematical Foundations of Computer Science 2012 - 37th International
  Symposium, {MFCS} 2012, Bratislava, Slovakia, August 27-31, 2012.
  Proceedings}, volume 7464 of {\em Lecture Notes in Computer Science}, pages
  184--197. Springer, 2012.
\newblock URL: \url{https://doi.org/10.1007/978-3-642-32589-2_19}.

\bibitem{DBLP:journals/aor/BockenhauerFHKKPW07}
Hans{-}Joachim B{\"{o}}ckenhauer, Luca Forlizzi, Juraj Hromkovic, Joachim
  Kneis, Joachim Kupke, Guido Proietti, and Peter Widmayer.
\newblock On the approximability of {TSP} on local modifications of optimally
  solved instances.
\newblock {\em Algorithmic Operations Research}, 2(2):83--93, 2007.
\newblock URL:
  \url{http://journals.hil.unb.ca/index.php/AOR/article/view/2803}.

\bibitem{DBLP:journals/jda/BockenhauerFHMSS12}
Hans{-}Joachim B{\"{o}}ckenhauer, Karin Freiermuth, Juraj Hromkovic, Tobias
  M{\"{o}}mke, Andreas Sprock, and Bj{\"{o}}rn Steffen.
\newblock {S}teiner tree reoptimization in graphs with sharpened triangle
  inequality.
\newblock {\em J. Discrete Algorithms}, 11:73--86, 2012.
\newblock URL: \url{https://doi.org/10.1016/j.jda.2011.03.014}.

\bibitem{DBLP:journals/tcs/BockenhauerHKMR09}
Hans{-}Joachim B{\"{o}}ckenhauer, Juraj Hromkovic, Richard Kr{\'{a}}lovic,
  Tobias M{\"{o}}mke, and Peter Rossmanith.
\newblock Reoptimization of {S}teiner trees: Changing the terminal set.
\newblock {\em Theor. Comput. Sci.}, 410(36):3428--3435, 2009.
\newblock URL: \url{https://doi.org/10.1016/j.tcs.2008.04.039}.

\bibitem{DBLP:conf/sofsem/BockenhauerHMW08}
Hans{-}Joachim B{\"{o}}ckenhauer, Juraj Hromkovic, Tobias M{\"{o}}mke, and
  Peter Widmayer.
\newblock On the hardness of reoptimization.
\newblock In Viliam Geffert, Juhani Karhum{\"{a}}ki, Alberto Bertoni, Bart
  Preneel, Pavol N{\'{a}}vrat, and M{\'{a}}ria Bielikov{\'{a}}, editors, {\em
  {SOFSEM} 2008: Theory and Practice of Computer Science, 34th Conference on
  Current Trends in Theory and Practice of Computer Science, Nov{\'{y}}
  Smokovec, Slovakia, January 19-25, 2008, Proceedings}, volume 4910 of {\em
  Lecture Notes in Computer Science}, pages 50--65. Springer, 2008.
\newblock URL: \url{https://doi.org/10.1007/978-3-540-77566-9_5}.

\bibitem{DBLP:conf/birthday/BockenhauerHS11}
Hans{-}Joachim B{\"{o}}ckenhauer, Juraj Hromkovic, and Andreas Sprock.
\newblock Knowing all optimal solutions does not help for {TSP} reoptimization.
\newblock In Jozef Kelemen and Alica Kelemenov{\'{a}}, editors, {\em
  Computation, Cooperation, and Life - Essays Dedicated to Gheorghe Paun on the
  Occasion of His 60th Birthday}, volume 6610 of {\em Lecture Notes in Computer
  Science}, pages 7--15. Springer, 2011.
\newblock URL: \url{https://doi.org/10.1007/978-3-642-20000-7_2}.

\bibitem{DBLP:journals/fuin/BockenhauerHS11}
Hans{-}Joachim B{\"{o}}ckenhauer, Juraj Hromkovic, and Andreas Sprock.
\newblock On the hardness of reoptimization with multiple given solutions.
\newblock {\em Fundam. Inform.}, 110(1-4):59--76, 2011.
\newblock URL: \url{https://doi.org/10.3233/FI-2011-528}.

\bibitem{DBLP:journals/tcs/BockenhauerKK09}
Hans{-}Joachim B{\"{o}}ckenhauer, Joachim Kneis, and Joachim Kupke.
\newblock Approximation hardness of deadline-{TSP} reoptimization.
\newblock {\em Theor. Comput. Sci.}, 410(21-23):2241--2249, 2009.
\newblock URL: \url{https://doi.org/10.1016/j.tcs.2009.02.016}.

\bibitem{DBLP:journals/jda/BockenhauerK10}
Hans{-}Joachim B{\"{o}}ckenhauer and Dennis Komm.
\newblock Reoptimization of the metric deadline {TSP}.
\newblock {\em J. Discrete Algorithms}, 8(1):87--100, 2010.
\newblock URL: \url{https://doi.org/10.1016/j.jda.2009.04.001}.

\bibitem{DBLP:journals/siamcomp/BorchersD97}
Al~Borchers and Ding{-}Zhu Du.
\newblock The k-{S}teiner ratio in graphs.
\newblock {\em {SIAM} J. Comput.}, 26(3):857--869, 1997.
\newblock URL: \url{https://doi.org/10.1137/S0097539795281086}.

\bibitem{DBLP:journals/tcs/BoriaC14}
Nicolas Boria and Federico~Della Croce.
\newblock Reoptimization in machine scheduling.
\newblock {\em Theor. Comput. Sci.}, 540:13--26, 2014.
\newblock URL: \url{https://doi.org/10.1016/j.tcs.2014.04.004}.

\bibitem{DBLP:journals/tcs/BoriaMP13}
Nicolas Boria, J{\'{e}}r{\^{o}}me Monnot, and Vangelis~Th. Paschos.
\newblock Reoptimization of maximum weight induced hereditary subgraph
  problems.
\newblock {\em Theor. Comput. Sci.}, 514:61--74, 2013.
\newblock URL: \url{https://doi.org/10.1016/j.tcs.2012.10.037}.

\bibitem{DBLP:journals/dmaa/BoriaMP13}
Nicolas Boria, J{\'{e}}r{\^{o}}me Monnot, and Vangelis~Th. Paschos.
\newblock Reoptimization under vertex insertion: Max p\({}_{\mbox{k}}\)-free
  subgraph and max planar subgraph.
\newblock {\em Discrete Math., Alg. and Appl.}, 5(2), 2013.
\newblock URL: \url{https://doi.org/10.1142/S1793830913600045}.

\bibitem{DBLP:journals/jda/BoriaP10}
Nicolas Boria and Vangelis~Th. Paschos.
\newblock Fast reoptimization for the minimum spanning tree problem.
\newblock {\em J. Discrete Algorithms}, 8(3):296--310, 2010.
\newblock URL: \url{https://doi.org/10.1016/j.jda.2009.07.002}.

\bibitem{DBLP:journals/rairo/BoriaP11}
Nicolas Boria and Vangelis~Th. Paschos.
\newblock A survey on combinatorial optimization in dynamic environments.
\newblock {\em {RAIRO} - Operations Research}, 45(3):241--294, 2011.
\newblock URL: \url{https://doi.org/10.1051/ro/2011114}.

\bibitem{DBLP:journals/jacm/ByrkaGRS13}
Jaroslaw Byrka, Fabrizio Grandoni, Thomas Rothvo{\ss}, and Laura Sanit{\`{a}}.
\newblock {S}teiner tree approximation via iterative randomized rounding.
\newblock {\em J. {ACM}}, 60(1):6:1--6:33, 2013.
\newblock URL: \url{http://doi.acm.org/10.1145/2432622.2432628}.

\bibitem{DBLP:conf/cocoon/Dai17}
Wenkai Dai.
\newblock Reoptimization of minimum latency problem.
\newblock In Yixin Cao and Jianer Chen, editors, {\em Computing and
  Combinatorics - 23rd International Conference, {COCOON} 2017, Hong Kong,
  China, August 3-5, 2017, Proceedings}, volume 10392 of {\em Lecture Notes in
  Computer Science}, pages 162--174. Springer, 2017.
\newblock URL: \url{https://doi.org/10.1007/978-3-319-62389-4_14}.

\bibitem{DreyfusWagner1971}
S.~E. Dreyfus and R.~A. Wagner.
\newblock The steiner problem in graphs.
\newblock {\em Networks}, 1(3).
\newblock URL: \url{http://dx.doi.org/10.1002/net.3230010302}.

\bibitem{DBLP:journals/aor/EscoffierMP09}
Bruno Escoffier, Martin Milanic, and Vangelis~Th. Paschos.
\newblock Simple and fast reoptimizations for the {S}teiner tree problem.
\newblock {\em Algorithmic Operations Research}, 4(2):86--94, 2009.
\newblock URL:
  \url{http://journals.hil.unb.ca/index.php/AOR/article/view/5653}.

\bibitem{DBLP:conf/fsttcs/GoyalM15}
Keshav Goyal and Tobias M{\"{o}}mke.
\newblock Robust reoptimization of {S}teiner trees.
\newblock In Prahladh Harsha and G.~Ramalingam, editors, {\em 35th {IARCS}
  Annual Conference on Foundation of Software Technology and Theoretical
  Computer Science, {FSTTCS} 2015, December 16-18, 2015, Bangalore, India},
  volume~45 of {\em LIPIcs}, pages 10--24. Schloss Dagstuhl - Leibniz-Zentrum
  fuer Informatik, 2015.
\newblock URL: \url{https://doi.org/10.4230/LIPIcs.FSTTCS.2015.10}.

\bibitem{DBLP:journals/mpc/HougardySV17}
Stefan Hougardy, Jannik Silvanus, and Jens Vygen.
\newblock {D}ijkstra meets {S}teiner: a fast exact goal-oriented {S}teiner tree
  algorithm.
\newblock {\em Math. Program. Comput.}, 9(2):135--202, 2017.
\newblock URL: \url{https://doi.org/10.1007/s12532-016-0110-1}.

\bibitem{DBLP:journals/ipl/Monnot15}
J{\'{e}}r{\^{o}}me Monnot.
\newblock A note on the traveling salesman reoptimization problem under vertex
  insertion.
\newblock {\em Inf. Process. Lett.}, 115(3):435--438, 2015.
\newblock URL: \url{https://doi.org/10.1016/j.ipl.2014.11.003}.

\bibitem{DBLP:journals/dam/Schaffter97}
Markus~W. Sch{\"{a}}ffter.
\newblock Scheduling with forbidden sets.
\newblock {\em Discrete Applied Mathematics}, 72(1-2):155--166, 1997.
\newblock URL: \url{https://doi.org/10.1016/S0166-218X(96)00042-X}.

\bibitem{AnnaZychPhDthesis}
Anna Zych.
\newblock {\em Reoptimization of {NP}-hard problems}.
\newblock PhD thesis, Diss., Eidgen{\"{o}}ssische Technische Hochschule {ETH}
  Z{\"{u}}rich, Nr. 20257, 2012.

\bibitem{DBLP:journals/endm/ZychB11}
Anna Zych and Davide Bil{\`{o}}.
\newblock New reoptimization techniques applied to {S}teiner tree problem.
\newblock {\em Electronic Notes in Discrete Mathematics}, 37:387--392, 2011.
\newblock URL: \url{https://doi.org/10.1016/j.endm.2011.05.066}.

\end{thebibliography}



\end{document}